\numberwithin{equation}{section}
\theoremstyle{plain}
\newtheorem{assumption}{Assumption}[section]
\newtheorem{remark}{Remark}[section]
\newtheorem{claim}{Claim}[section]
\newtheorem{condition}{Condition}[section]
\newtheorem{criterion}{Criterion}[section]
\newtheorem{definition}{Definition}[section]
\newtheorem{example}{Example}[section]
\newtheorem{proposition}{Proposition}[section]
\newtheorem{problem}{Problem}[section]
\newtheorem{lemma}{Lemma}[section]
\begin{document}

\begin{frontmatter}
\title{Nonlinear Principal Components and Long-run Implications
of Multivariate Diffusions\protect\thanksref{T1}}
\runtitle{Nonlinear Principal Components and the Long Run}
\thankstext{T1}{Forthcoming in the Annals of Statistics. The title of an earlier version of this paper is Principal Components and the Long Run.
We thank Anna Aslanyan, Henri
Berestycki, Jaroslav Borovi\v{c}ka, Ronald Coifman, E. Brian Davies,
Nan Li, Oliver Linton, Peter McCullagh, Nour Meddahi, Gigliola
Staffilani, Stephen Stigler, Gauhar Turmuhambetova and Noah Williams
for useful conversations.}

\begin{aug}
\author{\fnms{Xiaohong} \snm{Chen}\thanksref{t2}\ead[label=e1]{xiaohong.chen@yale.edu}},
\author{\fnms{Lars Peter} \snm{Hansen}\thanksref{t3}\ead[label=e2]{lhansen@uchicago.edu}}
\and
\author{\fnms{Jos\'{e}} \snm{Scheinkman}\thanksref{t4}
\ead[label=e3]{joses@princeton.edu}
}

\thankstext{t2}{National Science Foundation support under Award Number SES0631613.}
\thankstext{t3}{National Science Foundation support under Award Number SES0519372.}
\thankstext{t4}{National Science Foundation support under Award Numbers SES0350770 and SES0718407}
\runauthor{Chen, Hansen and Scheinkman}

\affiliation{Yale University, University of Chicago and Princeton
University}

\address{Xiaohong Chen\\ Cowles Foundation for Research in Economics\\
Department of Economics \\
Yale University \
  Box 208281 \\
  New Haven, CT 06520-8281\\
\printead{e1}}

\address{Lars Peter Hansen\\ Departments of Economics and Statistics\\
The University of Chicago\\
1126 East 59th Street\\
Chicago, IL 60637\\
\printead{e2}}

\address{Jos\'{e} Scheinkman \\ Department of Economics\\
Princeton University\\
26 Prospect Avenue \\
Princeton, NJ 08540-5296\\
\printead{e3}}

\end{aug}

\begin{abstract}
We investigate a method for extracting nonlinear principal
components (NPCs). These NPCs maximize variation subject to
smoothness and orthogonality constraints; but we allow for a general
class of constraints and multivariate probability densities,
including densities without compact support and even densities with
algebraic tails. We provide primitive sufficient conditions for the
existence of these NPCs. By exploiting the theory of
continuous-time, reversible Markov diffusion processes, we give a
different interpretation of these NPCs and the smoothness
constraints. When the diffusion matrix is used to enforce
smoothness, the NPCs maximize long-run variation relative to the
overall variation subject to orthogonality constraints. Moreover,
the NPCs behave as scalar autoregressions with heteroskedastic
innovations; this supports semiparametric identification and
estimation of a multivariate reversible diffusion process and tests
of the overidentifying restrictions implied by such a process from
low frequency data. We also explore implications for stationary,
possibly non-reversible diffusion processes. Finally, we suggest a
sieve method to
estimate the NPCs from discretely-sampled data.
\end{abstract}

\begin{keyword}[class=AMS]
\kwd[Primary ]{62H25} \kwd{47D07} \kwd[; secondary ]{35P05}
\end{keyword}

\begin{keyword}
\kwd{nonlinear principal components} \kwd{multivariate diffusion}
\kwd{quadratic form} \kwd{conditional expectations operator, low
frequency data}
\end{keyword}

\end{frontmatter}



\section{Introduction}

Principal components are functions of the data that capture maximal
variation in some sense.  Often they are restricted to be linear
functions of the underlying data as in original analyses of
\citet{pearson} and \citet{hotelling}.  In this paper we study the
extraction of nonlinear principal components (NPCs) using
information encoded in the probability density of the data.
Formally, the NPCs maximize variation subject to orthogonality and
smoothness constraints where smoothness constraints are enforced by
a quadratic form $f$ expressed in terms of the gradients of
functions. Specifically, the quadratic form is
\[
f(\phi,\psi) = {\frac 1 2}\int_{x\in \Omega} \nabla
\phi(x)'\Sigma(x) \nabla \psi(x)q(x) dx
\]
where $\nabla$ denotes the (weak) gradient operator, $\Omega$ is the
state space, $\Sigma$ is a state-dependent positive-definite matrix,
and $q$ is the invariant density of a strictly stationary ergodic
data $\{x_i\}_{i=1}^{T}$.

Alternatively, NPCs are solutions to approximation problems. Suppose
we wish to form the best finite-dimensional least squares
approximation to an infinite-dimensional space of smooth functions,
where we use the form $f$ to limit the class of functions to be
approximated.  In a sense that we make formal, a finite number of
NPCs solves this problem. More stringent smoothness restrictions
enforced by penalization limit the family of functions to be
approximated while improving the overall quality of approximation.
Thus our analysis of NPCs is in part an investigation of this
approximation.

Previously \citet{boxtiao} proposed a canonical analysis of
multivariate linear time series. This analysis produces linear
principal components of the multivariate process that can be ordered
from least to most predictable.  Much later in a seemingly unrelated
paper, \citet{salinelli} defined NPCs for multivariate absolutely
continuous random variables and characterized these NPCs as
eigenfunctions of a self-adjoint, differential operator. As we will
show these two methods are related. We share \citet{salinelli}
interest in NPCs, but our departure from his work is substantial.
For Salinelli, the matrix $\Sigma$ is the identity matrix,
the state space $\Omega$ is compact and the
density $q$ is bounded above and below for the bulk of his analysis.
Our interest in probability densities $q$ that do not have compact
support, including densities with algebraic tails, leads us
naturally to consider a more general class of smoothness penalties.
By allowing for a more flexible specification for $\Sigma$ and $q$,
we entertain a larger class of smoothness constraints {\it vis a
vis} \citet{salinelli} with explicit links to the data generation.
Establishing the existence of NPCs in our setup is no longer
routine.

\citet{salinelli} assumed that the data generation process is
independent and identically distributed (IID).  While our analysis
is applicable to such an environment, we also explore the case in
which data $\{x_i\}_{i=1}^{T}$ is sampled in low frequency from a
stationary Markov diffusion process. By considering such processes,
we make a specific choice of the matrix $\Sigma$ used to enforce
smoothness. It is the local covariance or diffusion matrix. With
this choice, the NPCs extracted with smoothness penalties are
ordered by the ratio of their long-run variation to the overall
variation as in \citet{boxtiao}. NPCs that capture variation subject
to smoothness constraints also display low frequency variation due
to their high persistence. In effect we provide an extension of the
method of \citet{boxtiao} to nonlinear, multivariate Markov
diffusions, and establish an explicit link to the method of
\citet{salinelli}.

In this paper we do the following:

\begin{enumerate}

\item{Formulate the NPCs extraction to include state
dependence in the smoothness
constraint and state spaces that have infinite Lebesgue measure.}

\item{Give sufficient conditions for the existence of these
NPCs.}

\item{Provide a reversible Markov diffusion process for the data generation that supports the
NPCs extraction method and generates testable implications.}

\item{Explore implications for a more general class of Markov
diffusion processes.}


\end{enumerate}

The rest of the paper is organized as follows. In section
\ref{sec:prin}, we first define NPCs as functions that maximize
variation subject to orthogonality conditions and smoothness bounds
given by the quadratic form $f$. Section \ref{sec:exist} presents
existence results. In section \ref{sec:forms} we suppose the data
are sampled from a multivariate nonlinear diffusion and establish
the connection between our NPCs and the canonical analysis of
\citet{boxtiao}. The results in section \ref{sec:reverse} relate the
NPCs to eigenfunctions of conditional expectations operators
associated with a stationary Markov process $\{x_t\}$ defined using
the diffusion matrix $\Sigma$ and the stationary density $q.$ Given
an eigenfunction $\psi,$ the process $\{\psi(x_t)\}$ behaves as a
scalar autoregression. Thus the eigenfunctions we obtain satisfy
testable implications when the data is generated by a
 Markov process. The Markov process constructed in Section \ref{sec:reverse}
 is time reversible.  In Section \ref{sec:irrev} we characterize other Markov processes
 associated with the same $q$ and $\Sigma$. Section
\ref{sec:consist} provides a sieve method to
estimate these NPCs using discrete-time low frequency observations
$\{x_i\}_{i=1}^{T}$. Section \ref{sec:conclude} gives some
concluding remarks and discusses applications of our results. The
appendices contain computations associated with an example and some
proofs that are not stated in the main text.

\section{Nonlinear principal components} \label{sec:prin}

To define a functional notion of principal components we require two
quadratic forms. We start with an open connected $\Omega \subseteq
\mathbb{R}^n$.  Let $q$ be a probability density on $\Omega$ with
respect to Lebesgue measure.  The implied probability distribution
$Q$ is the population counterpart to the empirical distribution of
the data.  The data could be IID as in \citet{salinelli}, but we are
primarily interested in the case in which the data
$\{x_i\}_{i=1}^{T}$ are sampled in low frequency from a
continuous-time, stationary Markov diffusion $\{ x_t :t\ge 0 \}$. In
this case $q$ is the stationary density of $x_t$.

Let $L^2$ denote the space of Borel measurable square integrable
functions with respect to the population probability distribution
$Q$. The $L^2$ inner product (denoted $<\cdot, \cdot>$) is one of
the two forms of interest. We use the corresponding norm to define
an approximation criterion.

The second form is used to measure smoothness.  Consider a
(quadratic) form $f_o$ defined on $C_K^2$, the space of twice
continuously differentiable functions with compact support in
$\Omega$,  that can be parameterized in terms of the density $q$
and a positive definite matrix $\Sigma$ that can depend on the
state:
\begin{equation}
f_o(\phi, \psi) = {\frac 1 2} \int_{\Omega} \sum_{i,j} \sigma_{ij}
{\frac {\partial \phi} {\partial y_j}}{\frac {\partial \psi}
{\partial y_i}}q , \label{edefpar}
\end{equation}
where
\[
\Sigma = [ \sigma_{ij} ].
\]

\begin{assumption}
\label{adensity} $q$ is a positive, continuously differentiable
probability density on $\Omega$.
\end{assumption}

\begin{assumption} \label{adiffusion}
$\Sigma $ is a continuously differentiable, positive definite matrix
function on $\Omega$.
\end{assumption}

\noindent Assumptions \ref{adensity} and \ref{adiffusion} restrict
the density $q$ and the matrix $\Sigma$ to be continuously
differentiable. These assumptions are made for convenience. As
argued by \citet{davies89} (see Theorem 1.2.5) these restrictions can
be replaced by a less stringent requirement that entries of the
matrix $q\Sigma$ are locally (in $L^2$(Lebesgue)), weakly
differentiable.

While the $f_o$ is constructed in terms of the product $q \Sigma$,
the density $q$ will play a distinct role when we consider extending
the domain of the form to a larger set of functions.

To study the case in which $\Omega$ is not compact, we will
consider a particular closed extension of the form $f_o$. We
extend the form $f_o$ to a larger domain ${\bar H}\subset {L}^2 $
using the  notion of a weak derivative.
\begin{eqnarray*}
{\bar H} & \doteq &\{\phi \in L^2: \mathrm{ there \ \ exists}\; g\;
\mathrm{measurable, \ \ with}\; \int g^{\prime }\Sigma g q <
\infty,\\ & & \mathrm{and}\; \int \phi \nabla \psi = - \int g \psi,
\mathrm {for \ \ all }\; \psi \in  C_K^{1} \}.
\end{eqnarray*}
The random vector $g$ is unique (for each $\phi$) and is referred to
as the {\it weak derivative} of $\phi$. From now on, for each $\phi$
in ${\bar H}$ we write $\nabla \phi = g$.

Notice that ${\bar H}$ is constructed exactly as a weighted Sobolev
space except that instead of requiring that $g \in L^2$, we require
that $\Lambda g \in L^2$ where $\Lambda$ is the square root of
$\Sigma$. Also we use $C_K^{1}$ test functions. One can show, using
mollifiers, that allowing for this larger set of test functions is
equivalent to using the more usual set of test functions,
$C_K^{\infty}$  (see \citet{brezis} Remark 1, page 150.) For any pair
of functions $\psi$ and $\phi$ in ${\bar H}$ we define:
\[
f(\phi,\psi) = {\frac 1 2} \int_{\Omega} ({\nabla \phi})'\Sigma
(\nabla \psi)q,
\]
which is an extension of $f_o$. In ${\bar H}$ we use the inner
product $<\phi, \psi>_{\bar f}= < \phi, \psi> + f(\phi,\psi)$. With
this inner product, ${\bar H}$ is complete and hence a Hilbert space
(see Proposition \ref{hilbertspace} in the Appendix). Thus ${\bar
H}$ is taken to be the domain ${\mathcal D}(f)$ of the form $f$.
 Notice, in particular, that the unit function is in ${\mathcal D}(f) = \bar H$.

\subsection{ Initial construction}
NPCs maximize variation subject to smoothness constraints. In our
generalization these NPCs are defined as follows.

\begin{definition}\label{d:sprina} The function $\psi_{j}$ is the $j^{th}$ nonlinear
principal component (NPC) for $j\geq 1$ if $\psi _{j}$ solves:
\[ \max_{\phi \in \bar H } <\phi, \phi>
\]
subject to
\begin{eqnarray*}
f(\phi, \phi) & = & 1,  \\
<\psi_s,\phi> & = & 0, s=0,...,j-1,
\end{eqnarray*}
where $\psi_{0}$ is initialized to be the constant function one.
\end{definition}

There are two differences between our proposed extraction and that
of \citet{salinelli}.
First, \citet{salinelli} assumes that $\Sigma$ is the identity matrix.
To accommodate a richer class of densities,
we allow $\Sigma$ to be state dependent. Second, \citet{salinelli}
assumes that the data density $q$ has finite Lebesgue measure and is
bounded away from zero. We allow the Lebesgue measure of the state
space to be infinite, and accordingly our density $q$ is no longer
assumed to be bounded from below.


NPCs are eigenfunctions of the quadratic forms $f$.
\begin{definition}
An eigenfunction $\psi$ of the quadratic form $f$ satisfies:
\begin{equation}
\label{eigenform} f(\phi,\psi) = \delta <\phi,\psi>
\end{equation}
for all $\phi \in {\mathcal D}(f)$. The scalar $\delta$ is the
corresponding eigenvalue.
\end{definition}

\noindent Since $f$ is positive semidefinite, $\delta$ must be
nonnegative.  The NPCs extracted in the manner given in
(\ref{d:sprina}) have eigenvalues $\delta_j$ that increase with $j$.
If we renormalize the eigenfunctions to have a unit second moment,
the NPCs will be ordered by their smoothness as measured by
$\delta_j = f({\psi}_j,{\psi}_j )$.  Moreover, $f(\psi_j,\psi_k) =
0$ for $j \ne k$.

Suppose that the  NPCs $\{ \psi_j : j = 0, 1, ... \}$ exist with
corresponding eigenvalues $\{ \delta_j : j = 0,1,... \}.$ Consider
any $\phi$ in $L^2$. Then
\[
\phi = \sum_{j=0}^\infty {\frac {<\psi_j,\phi>} {<\psi_j, \psi_j>}}
\psi_j,
\]
\[
<\phi, \phi> =  \sum_{j=0}^\infty {\frac {<\psi_j, \phi>^2}
{<\psi_j,\psi_j>}},
\]
and for any $\phi,\psi \in {\mathcal D}(f)$,
\begin{equation} \label{eigen:fdecom}
f(\phi, \psi) = \sum_{j=0}^\infty \delta_j {\frac {<\phi,
\psi_j><\psi,\psi_j>}{<\psi_j,\psi_j>}} .
\end{equation}

\subsection{Benchmark optimization problem} \label{subsec:bench}

Let $H$ be a closed linear subspace of $L^2$, and consider the
optimization problem:

\begin{problem} \label{prob:jose1}
\[
\max_{\phi \in H} < \phi, \phi>
\]
subject to
\[
\theta <\phi, \phi> +  f(\phi, \phi) \le 1
\]
for some $\theta > 0$.
\end{problem}

A necessary condition for $\psi$ to be a NPC is that it satisfies an
eigenvalue problem:

\begin{claim} \label{claim:eigen}
A solution $\psi$ to Problem \ref{prob:jose1} will also solve the
eigenvalue problem:
\[
<\phi, \psi> = \lambda[\theta <\phi, \psi> + f(\phi, \psi )]
\]
for some positive $\lambda$ and all $\phi \in H$.
\end{claim}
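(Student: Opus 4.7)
The strategy is to recast the constrained maximization as an unconstrained maximization of a scale-invariant Rayleigh-type quotient, and then read off the eigenvalue equation as a first-order condition. Throughout I may assume $H \ne \{0\}$ (else the claim is vacuous) and $H \subseteq \mathcal{D}(f)$ (needed for the constraint to be finite).

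First I would argue that at any solution $\psi$ the constraint binds, i.e.\ $\theta \langle \psi,\psi\rangle + f(\psi,\psi) = 1$. Otherwise, for $c>1$ sufficiently close to $1$, the rescaling $c\psi \in H$ still satisfies the constraint but yields a strictly larger value $c^2 \langle \psi,\psi\rangle$ of the objective, contradicting optimality. This already forces $\psi \ne 0$, so the scalar $\lambda := \langle \psi,\psi\rangle$ is strictly positive.

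Next I would show that $\psi$ maximizes the Rayleigh-type quotient
\[
R(\phi) \doteq \frac{\langle \phi,\phi\rangle}{\theta\langle \phi,\phi\rangle + f(\phi,\phi)}
\]
over $\phi \in H\setminus\{0\}$. Since $R$ is homogeneous of degree zero, every nonzero $\phi \in H$ can be rescaled so that $\theta\langle\phi,\phi\rangle + f(\phi,\phi) = 1$, and for such a rescaled $\phi$ the value of $R$ coincides with the objective of Problem \ref{prob:jose1}. Hence the two optimizations have the same supremum and the same maximizers up to positive scalars.

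Finally, for any $\phi \in H$ the curve $t \mapsto \psi + t\phi$ lies entirely in $H$ by linearity, and both $\langle\cdot,\cdot\rangle$ and $f$ are continuous bilinear forms on $\mathcal{D}(f)=\bar H$ equipped with $\langle\cdot,\cdot\rangle_{\bar f}$, so $t \mapsto R(\psi+t\phi)$ is smooth in a neighborhood of $t=0$, with denominator equal to $1$ at $t=0$. Setting $\frac{d}{dt}R(\psi+t\phi)\big|_{t=0}=0$, clearing the denominator, and using $\theta\langle\psi,\psi\rangle + f(\psi,\psi) = 1$ gives
\[
\langle \phi,\psi\rangle \;=\; \langle \psi,\psi\rangle\bigl[\theta\langle\phi,\psi\rangle + f(\phi,\psi)\bigr]
\]
for every $\phi \in H$, which is the claim with $\lambda = \langle \psi,\psi\rangle > 0$. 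The only subtle step is justifying that a first-order condition in the Hilbert-space setting of $\bar H$ is legitimate; this reduces to the fact that $R$ is a ratio of jointly continuous bilinear forms with nonvanishing denominator at $\psi$, so the obstacle is minor.
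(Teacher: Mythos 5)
The paper states Claim \ref{claim:eigen} without proof, treating it as a standard first-order/Lagrange-multiplier fact, so there is no authorial argument to compare against. Your Rayleigh-quotient proof is the right one and it is correct: the constraint binds at the optimum by the scaling argument (forcing $\langle\psi,\psi\rangle>0$ once there is any nonzero feasible point); the optimizer of Problem \ref{prob:jose1} is, up to scale, the maximizer of the degree-zero homogeneous ratio $R$; and differentiating $t\mapsto R(\psi+t\phi)$ at $t=0$, where the denominator equals $1$, yields exactly $\langle\phi,\psi\rangle = \langle\psi,\psi\rangle\bigl[\theta\langle\phi,\psi\rangle + f(\phi,\psi)\bigr]$, identifying $\lambda=\langle\psi,\psi\rangle$.

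Two small points worth tightening. First, the eigenvalue identity can only be asserted for $\phi\in H\cap\mathcal{D}(f)$, since $f(\phi,\psi)$ is undefined otherwise; you effectively acknowledge this by assuming $H\subseteq\mathcal{D}(f)$, but it is cleaner to just restrict the quantifier, which is also what the paper implicitly does whenever it writes $f(\phi,\cdot)$. Second, the claim that the constraint must bind, and hence $\psi\neq 0$, needs $H\cap\mathcal{D}(f)\neq\{0\}$ rather than merely $H\neq\{0\}$; this is automatic in the paper's applications ($H$ is $L^2$ or a finite-codimension subspace and $\mathcal{D}(f)\supset C_K^2$ is dense), and in the degenerate case $H\cap\mathcal{D}(f)=\{0\}$ the identity holds vacuously for any $\lambda>0$, so nothing actually breaks. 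Neither point affects the soundness of the argument; they are just the right hypotheses to make explicit.
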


To establish the existence of a solution to Problem
\ref{prob:jose1}, it suffices to suppose the following:
\begin{condition}
\label{condhigh} (existence) $\{\phi \in \mathcal{D}(f) :  f(\phi,
\phi) + \theta <\phi,\phi> \le 1\}$ is precompact (has compact
closure) in $L^2$.
\end{condition}
\noindent  The precompactness restriction guarantees that we may
extract an $L^2$ convergent sequence in the constraint set, with
objectives that approximate the supremum.  The limit point of
convergent sequence used to approximate the supremum, however, will
necessarily be in the constraint set because the constraint set is
convex and the form is closed.

\subsection{Approximation}

Why do we care about NPCs?  One way to address this is to explore
the construction of the best, finite-dimensional, least squares
approximations. Specifically, suppose we wish to construct the best
finite dimensional set of approximating functions for the space of
functions that are square integrable with respect to a probability
measure $Q$ with density $q$. We now motivate NPCs as the recursive
solution to such a problem. The $N$-dimensional problem is solved by
solving $N$ one-dimensional problems using a sequence of $H$'s that
remove one dimension in each step.  The outcome at each step is a
NPC used as an additional approximating function.

Initially solve Problem \ref{prob:jose1} for $H = L^2$, select a
solution $\psi_0$ and denote the maximized objective as $\lambda_0$.
Inductively, given $\psi_0, \psi_1, ...,\psi_{j-1}$, form $H_{j-1}$
as the $j$ dimensional space generated by these $j$ solutions
constructed recursively. Let $H_{j-1}^\perp$ denote the space of all
elements of $L^2$ that are orthogonal to these $j$ solutions and
hence orthogonal to $H_{j-1}$.  Solve Problem \ref{prob:jose1} for
$H = H_{j-1}^\perp$, select a solution $\psi_j$, and form
$\lambda_j$ as the maximized value. The sequence $\{ \lambda_j :
j=0,1,...\}$ is decreasing because we are omitting components of the
constraint set for the maximization problem as $j$ increases.

In what sense is such a recursive procedure optimal?    In answering
this question, let $Proj(\phi|{\hat H})$ denote the least squares
projection of $\phi$ onto the closed (in $L^2$) linear space ${\hat
H}$.  The second moment of the approximation error is:
\[
<\phi - Proj(\phi|{\hat H}), \phi - Proj(\phi|{\hat H})> =
<\phi,\phi> - [Proj(\phi| {\hat H})]^2.
\]

\begin{claim} \label{claim:bound} Let ${\hat H}$ denote any $N$-dimensional subspace
of $L^2$.  Then
\[
 \max_{\{\phi:\theta <\phi, \phi> + f(\phi, \phi) \le 1\}} \{ <\phi, \phi> -
 [Proj(\phi| {\hat H})]^2 \}\ge \lambda_{N}.
\]
\end{claim}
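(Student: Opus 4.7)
The plan is a Courant--Fischer style dimension-counting argument that leverages a double orthogonality property of the recursively constructed sequence $\psi_0,\ldots,\psi_N$ with respect to both $\langle\cdot,\cdot\rangle$ and the form $f$.

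First I would verify three preparatory facts about the outputs of the recursion. (i) By construction $\psi_j\in H_{j-1}^{\perp}$, so the $\psi_j$'s are pairwise $L^2$-orthogonal and, being nonzero, linearly independent. (ii) The constraint must bind at each stage, for otherwise a small upward rescaling of $\psi_j$ would strictly increase $\langle\phi,\phi\rangle$; hence $\theta\langle\psi_j,\psi_j\rangle+f(\psi_j,\psi_j)=1$. Feeding $\phi=\psi_j$ into the Euler equation of Claim \ref{claim:eigen} then yields $\lambda_j=\langle\psi_j,\psi_j\rangle$, which is strictly positive since $H_{j-1}^{\perp}\cap\mathcal{D}(f)$ contains nonzero elements (being infinite dimensional with $\mathcal{D}(f)$ dense in $L^2$). (iii) For any $k>j$, $\psi_k\in H_{k-1}^{\perp}\subseteq H_{j-1}^{\perp}$, so $\phi=\psi_k$ is an admissible test function in the Euler equation for $\psi_j$. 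Since $\langle\psi_k,\psi_j\rangle=0$ by (i), that equation collapses to $\lambda_j f(\psi_k,\psi_j)=0$, and $\lambda_j>0$ forces $f(\psi_j,\psi_k)=0$.

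Next I would carry out the dimension count. Let $V=\operatorname{span}\{\psi_0,\ldots,\psi_N\}$, which has dimension $N+1$ by (i). The restriction of $\mathrm{Proj}(\cdot\,|\,\hat H)$ to $V$ is a linear map into the $N$-dimensional space $\hat H$, so by rank--nullity its kernel contains a nonzero $\phi\in V$ with $\mathrm{Proj}(\phi\,|\,\hat H)=0$. Writing $\phi=\sum_{j=0}^{N}a_j\psi_j$ and invoking the two orthogonalities together with (ii),
\[
\theta\langle\phi,\phi\rangle+f(\phi,\phi)=\sum_{j=0}^{N}a_j^2,\qquad \langle\phi,\phi\rangle=\sum_{j=0}^{N}a_j^2\lambda_j.
\]

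Finally, rescale $\phi$ so that $\sum_{j=0}^{N}a_j^2=1$; then $\phi$ lies in the constraint set of the claim, and the monotonicity $\lambda_0\ge\lambda_1\ge\cdots\ge\lambda_N$ gives $\langle\phi,\phi\rangle\ge\lambda_N\sum_j a_j^2=\lambda_N$. Because the projection of $\phi$ onto $\hat H$ vanishes, $\langle\phi,\phi\rangle-[\mathrm{Proj}(\phi\,|\,\hat H)]^2=\langle\phi,\phi\rangle\ge\lambda_N$, establishing the claim. The most delicate step is (iii) together with the identification $\lambda_j=\langle\psi_j,\psi_j\rangle$ via a binding constraint; once these are in hand the rest is a standard dimension-counting argument applied to the $(N+1)$-dimensional span of eigenfunctions.
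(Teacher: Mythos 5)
Your proof is correct and follows essentially the same route as the paper's: a Courant--Fischer dimension count on $\operatorname{span}\{\psi_0,\ldots,\psi_N\}\cap\hat H^{\perp}$, combined with the double orthogonality of the $\psi_j$ under both $\langle\cdot,\cdot\rangle$ and $f$ and the binding-constraint normalization $\theta\langle\psi_j,\psi_j\rangle+f(\psi_j,\psi_j)=1$. The only difference is that you explicitly justify the preparatory facts (binding constraint, $\lambda_j=\langle\psi_j,\psi_j\rangle$, and the $f$-orthogonality of the $\psi_j$), which the paper asserts without proof or records elsewhere.
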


Our next result shows that the bound deduced in Claim
\ref{claim:bound} is attained by $H_{N-1}.$

\begin{claim} \label{claim:attain}
\[
\max_{\{\phi: \theta <\phi, \phi> + f(\phi, \phi) \le 1\}} \{<\phi,
\phi> - [Proj(\phi| H_{N-1})]^2\} = \lambda_N .
\]
\end{claim}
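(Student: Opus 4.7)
The plan is to reduce the approximation problem to Problem \ref{prob:jose1} on $H = H_{N-1}^{\perp}$ via an $L^2$-orthogonal decomposition, then invoke the recursive definition of $\lambda_N$ and exhibit an attainer. For any admissible $\phi$ satisfying $\theta<\phi,\phi> + f(\phi,\phi) \le 1$, I would split $\phi = \phi_1 + \phi_2$ with $\phi_1 := Proj(\phi \mid H_{N-1}) = \sum_{j=0}^{N-1} \frac{<\phi,\psi_j>}{<\psi_j,\psi_j>}\psi_j$ and $\phi_2 := \phi - \phi_1 \in H_{N-1}^{\perp}$. Since $<\phi,\phi> - [Proj(\phi \mid H_{N-1})]^2 = <\phi_2,\phi_2>$, the statement of Claim \ref{claim:attain} is equivalent to showing that the maximum of $<\phi_2,\phi_2>$ over the admissible set equals $\lambda_N$.

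Next I would apply Claim \ref{claim:eigen} at each stage of the recursive construction in Subsection \ref{subsec:bench} to deduce that, for each $j \le N-1$, there is a nonnegative scalar $\delta_j$ with
\[
f(\eta,\psi_j) = \delta_j <\eta,\psi_j> \qquad \text{for all } \eta \in H_{j-1}^{\perp}.
\]
Since $\phi_2 \in H_{N-1}^{\perp} \subseteq H_{j-1}^{\perp}$ and $<\phi_2,\psi_j> = 0$, this gives $f(\phi_2,\psi_j) = 0$ for every $j \le N-1$. By linearity in the first slot, $f(\phi_1,\phi_2) = 0$, so $\phi_1$ and $\phi_2$ are orthogonal in both $<\cdot,\cdot>$ and $f$. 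The Pythagorean identities $<\phi,\phi> = <\phi_1,\phi_1> + <\phi_2,\phi_2>$ and $f(\phi,\phi) = f(\phi_1,\phi_1) + f(\phi_2,\phi_2)$ then follow.

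Dropping the (nonnegative) $\phi_1$ contributions in the constraint yields $\theta<\phi_2,\phi_2> + f(\phi_2,\phi_2) \le 1$, so $\phi_2$ is admissible for Problem \ref{prob:jose1} on $H = H_{N-1}^{\perp}$. By the definition of $\lambda_N$, this gives $<\phi_2,\phi_2> \le \lambda_N$. For the reverse inequality, I would take $\phi = \psi_N$ itself, rescaled so that the penalty constraint holds with equality (which is automatic at the optimum of Problem \ref{prob:jose1}); then $\phi_1 = 0$, $\phi_2 = \psi_N$, and $<\phi_2,\phi_2> = <\psi_N,\psi_N> = \lambda_N$. I expect the main obstacle to be the $f$-orthogonality step: raw $L^2$-orthogonality to $H_{N-1}$ is not enough, and one must carefully convert the Lagrange-multiplier identity of Claim \ref{claim:eigen} into a vanishing pairing $f(\phi_2,\psi_j) = 0$. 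Once that is in hand, everything else is a direct Pythagorean calculation combined with the recursive definition of $\lambda_N$.
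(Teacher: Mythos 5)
Your argument is correct and is essentially the same as the paper's: decompose $\phi = Proj(\phi|H_{N-1}) + \varphi$ with $\varphi\in H_{N-1}^\perp$, use orthogonality of the NPCs in both $<\cdot,\cdot>$ and $f$ to get Pythagorean identities, observe the constraint only relaxes on $\varphi$ when the projection coefficients are dropped, and identify the reduced problem as Problem \ref{prob:jose1} on $H=H_{N-1}^\perp$ with optimum $\lambda_N$ attained at $\psi_N$. The only difference is that you spell out the $f$-orthogonality step via Claim \ref{claim:eigen}, whereas the paper states it directly.
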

%

Taken together, these two claims justify $H_{N-1}$ as a good
$N$-dimensional space of approximating functions.

\begin{remark}
There exist $N$-dimensional spaces other than $H_{N-1}$ that attain
the bound given in Claim \ref{claim:bound}. One reason is that there
may be multiple solutions to Problem \ref{prob:jose1}. Even when the
solution to Problem \ref{prob:jose1} is unique, at each stage of the
construction $\psi_{N-1}$ may be replaced by the sum of $\psi_{N-1}$
plus some $\psi's$ that is orthogonal to all of the solutions to
Problem \ref{prob:jose1} with $H = H_{N-1}^\perp$. Such a choice
cannot necessarily be used in a recursive construction of optimal
approximating spaces with dimension greater than $N$.
\end{remark}

\subsection{Nonlinear principal components revisited}

In Problem \ref{prob:jose1}, the constraint set gets larger as
$\theta$ declines to zero.  Reducing the smoothness penalty with a
smaller $\theta$ enlarges the collection of functions that satisfy
the constraint.  Thus the maximized objective increases as $\theta$
is reduced. While this is true, it turns out the maximizing choice
of $\phi$ does not depend on $\theta$ up to scale. This follows
because the ranking over $\phi$'s implied by the ratio:
\[
{\frac {<\phi, \phi>} {\theta <\phi,\phi> + f(\phi,\phi)}}
\]
does not depend on the value of $\theta$.  The same ranking is
also implied by the ratio:
\[
{\frac {<\phi, \phi>} {f(\phi,\phi)}}
\]
provided that $H$ is orthogonal to all constant functions.  Thus a
scaled solution $\psi$ to Problem \ref{prob:jose1} also
solves:

\begin{problem}
\[
\max_{\phi \in H} <\phi,\phi>
\]
subject to:
\[
f(\phi,\phi) = 1.
\]
\end{problem}
Restricting $H$ to be orthogonal to constant functions is equivalent
to limiting attention to functions $\phi$  that have mean zero under
the population data distribution $Q$. Recall that our construction
of NPCs is based on the recursive application of this problem.

From Claim \ref{claim:eigen} we know that $\psi$ satisfies:
\[
<\phi,\psi> = \lambda[\theta <\phi,\psi> + f(\phi, \psi )]
\]
for all $\phi \in H$.  Rearranging terms,
\[
f(\phi, \psi ) =  \delta <\phi,\psi >
\]
where
\[
\delta = { \frac  {1 - \theta \lambda}{\lambda }}.
\]
This is the eigenvalue associated with the NPC extraction.  Solving
for $\lambda$,
\[
\lambda = {\frac 1 {\theta + \delta }}.
\]
Since eigenvalues $\delta$ of the form increase without bound, the
corresponding sequence of $\lambda$'s converge to zero guaranteeing
that approximation becomes arbitrarily accurate as the number of
NPCs increases.

\section{Existence} \label{sec:exist}

In this section we consider more primitive sufficient conditions
that imply Condition \ref{condhigh}, which as we noted in  section
\ref{sec:prin}, guarantees the existence of NPCs.
 We allow for noncompact
state spaces and provide alternative restrictions on the tail
behavior of the the density $q$ and the penalization matrix $\Sigma$
that guarantee that the compactness criterion (Condition
\ref{condhigh}) is satisfied. Roughly speaking when the tails of the
density $q$ are exponentially thin, the compactness criterion can be
established without requiring that the matrix $\Sigma$ becomes large
(in the sense of positive definite matrices) in the tails. On the
other hand, when the tails of $q$ are algebraic and hence thicker,
divergence of $\Sigma$ in the tails can play an important role in
establishing Condition \ref{condhigh}.

We start by reviewing some known existence conditions, which we
extend using two devices. First, we transform the function space and
hence the (quadratic) form so that distribution induced by $q$ is
replaced by the Lebesgue measure. This transformation allows us to
apply known results for forms built using the Lebesgue measure. Second,
we study forms that are simpler but dominated by $f$. When the
dominated forms satisfy Condition \ref{condhigh} the same can be
said of $f$.

\subsection{Compact Domain}

\citet{salinelli} established the existence of eigenfunctions by
 applying Rellich's compact embedding theorem when the domain $\Omega $ is compact
 with a continuous boundary. This
approach requires a density $q$ that is bounded and bounded away
from zero and a derivative penalty  matrix $\Sigma$ that is
uniformly nonsingular.

\subsection{Real Line}

Perhaps surprisingly, the NPC extraction is nontrivial even for
densities on the real line.  This is because our NPCs can be
nonlinear functions of the underlying Markov state. We initially consider the case in which the
state space is the real line.

\begin{proposition} \label{scalar}
Suppose $ \Sigma = \varsigma^2 $ and
\begin{equation} \label{compactcrita}
 \int_0^\infty {\frac 1 {\varsigma^2(x) q(x)}} = + \infty ,
\int_0^\infty {\frac 1 { \varsigma^2(-x) q(-x)}} = + \infty ,
\end{equation}
\begin{equation} \label{compactcritb} \lim_{|x|
\rightarrow \infty}  - {\frac x {|x|}} \left[  \varsigma(x) {\frac
{ q^{\prime}(x)} {q(x)}} + \varsigma^{\prime}(x) \right] = +
\infty .
\end{equation}
Then Condition \ref{condhigh} is satisfied.
\end{proposition}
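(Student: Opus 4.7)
My plan is to verify the precompactness criterion in Condition \ref{condhigh} by combining local compactness on bounded intervals with a uniform tail estimate at infinity. For the local part, on any compact interval $[-R,R]$ Assumptions \ref{adensity} and \ref{adiffusion} ensure that $q$, $1/q$, $\varsigma^2$ and $1/\varsigma^2$ are all bounded, so the norm $\sqrt{f(\phi,\phi)+\langle\phi,\phi\rangle}$ restricted to $[-R,R]$ is equivalent to the standard $H^1([-R,R])$ norm. The classical Rellich--Kondrachov theorem then yields compactness of the restriction map from the constraint set into $L^2([-R,R], q\,dx)$.

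For the tail part, I would establish a Hardy/virial--type inequality
\[
\int V(x)\,\phi^2 q\,dx \;\le\; C\bigl[f(\phi,\phi) + \langle\phi,\phi\rangle\bigr], \qquad \phi\in\bar H,
\]
for a weight $V$ with $V(x)\to +\infty$ as $|x|\to\infty$; this immediately yields $\int_{|x|>R}\phi^2 q\,dx\to 0$ uniformly on the constraint set. To produce such $V$, I choose $U(x) := -K\operatorname{sgn}(x)\,\varsigma(x)$ for a positive constant $K$. For $\phi \in C_K^1$ supported in $\{|x|>R\}$ with $R$ bounded away from $0$, integration by parts gives
\[
\int \phi^2 (Uq)'\, dx \;=\; -\int 2\phi\phi'\, Uq\, dx,
\]
and a Cauchy--Schwarz with weight $\varsigma^2 q$ yields
\[
\int \phi^2 \!\left[\frac{(Uq)'}{q} - \frac{U^2}{\epsilon\,\varsigma^2}\right]\! q\,dx \;\le\; 2\epsilon\, f(\phi,\phi).
\]
A direct computation shows the bracket equals $-K\operatorname{sgn}(x)\bigl[\varsigma' + \varsigma\, q'/q\bigr] - K^2/\epsilon$, which by condition \eqref{compactcritb} diverges to $+\infty$ at both ends. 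To extend from compactly supported test functions to a general $\phi\in\bar H$, I multiply by a smooth cutoff $\chi_R$ that vanishes on $[-R,R]$, equals $1$ on $\{|x|\ge R+1\}$, and has bounded derivative; the commutator $\int(\chi_R')^2 \varsigma^2\phi^2 q\,dx$ is then controlled by $\langle\phi,\phi\rangle$ times a constant depending only on $R$.

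A diagonal--subsequence argument combining local compactness with uniform tail control then delivers precompactness in $L^2(Q)$, verifying Condition \ref{condhigh}. The main technical obstacle is the virial step: the constant $K$ must be chosen (perhaps as a function of $R$) so that the divergent contribution from \eqref{compactcritb} dominates both the Cauchy--Schwarz remainder $K^2/\epsilon$ and the cutoff commutator whose size is governed by $\sup_{R\le|x|\le R+1}\varsigma^2$; if $\varsigma$ itself grows rapidly, the margin between these must be re-examined by letting $K$ grow with $R$ at a controlled rate. Condition \eqref{compactcrita} enters indirectly, guaranteeing that $C_K^1$ is dense in $\bar H$ so that the virial bound extends by continuity, and that no spurious boundary--at--infinity contributions appear in the integrations by parts.
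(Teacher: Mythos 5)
The paper's own proof of Proposition~\ref{scalar} is a one-line citation to \citet{hst}: it asserts that conditions~(\ref{compactcrita}) and~(\ref{compactcritb}) are an equivalent restatement, in terms of the stationary density, of the compactness condition established there for scalar diffusions whose boundaries are not attracting. Your proof takes a genuinely different and self-contained route: local compactness via Rellich on compact intervals combined with a uniform tail estimate from a Hardy/virial inequality. The virial step is the right move, and I checked it: with $U(x)=-K\operatorname{sgn}(x)\varsigma(x)$, integration by parts and Cauchy--Schwarz with weight $\varsigma^2 q$ produce exactly
\[
V(x) = -K\operatorname{sgn}(x)\!\left[\varsigma'(x)+\varsigma(x)\frac{q'(x)}{q(x)}\right]-\frac{K^2}{\epsilon},
\]
which condition~(\ref{compactcritb}) sends to $+\infty$, so that $\int_{|x|>R}\phi^2 q\to 0$ uniformly on the constraint set. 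This gives more quantitative insight than the paper's deferred proof, and it is a legitimate alternative mechanism for establishing Condition~\ref{condhigh}.

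There are, however, two points to tighten, one cosmetic and one substantive. The cosmetic one: the cutoff $\chi_R$ you introduce (zero on $[-R,R]$, one on $\{|x|\ge R+1\}$) is not compactly supported, so $\chi_R\phi$ is not a $C_K^1$ function unless $\phi$ already is; the order should be to prove the virial bound first for $\phi\in C_K^1$ (cut off near the origin inside the compact support, so the commutator $\int(\chi_R')^2\varsigma^2\phi^2 q$ lives on the fixed annulus $\{R\le|x|\le R+1\}$ and is bounded by a constant times $\langle\phi,\phi\rangle$), and only then pass to $\bar H$. The substantive gap is the claim that condition~(\ref{compactcrita}) guarantees $C_K^1$ is dense in $\bar H$. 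That density (the core property, Condition~\ref{cond:core}) is established in the paper only under the radial restriction~(\ref{nattract}), which for $n=1$ reads $\int_1^\infty[\varsigma^2(r)q(r)+\varsigma^2(-r)q(-r)]^{-1}dr=\infty$; as the paper's own discussion notes, (\ref{nattract}) implies (\ref{compactcrita}) but not conversely. So invoking density under (\ref{compactcrita}) alone requires a separate scalar argument that treats the two boundaries $\pm\infty$ individually rather than radially. That fact is true (it is the one-dimensional statement that each boundary is non-attracting iff each scale function diverges separately, which yields essential self-adjointness of the generator on $C_K^2$), but it is the real content of the proposition and of the \citet{hst} reference, and it should be argued rather than asserted. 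Without it, your virial bound is only proved on $C_K^1$, and Fatou cannot carry it to the full constraint set in $\bar H$.
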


When $\varsigma$ is constant, the compactness condition
(\ref{compactcritb}) reduces to:
\[
\lim_{|x| \rightarrow  \infty}  - {\frac x {|x|}} \left[ {\frac {
q^{\prime}(x)} {q(x)}}\right] = + \infty,
\]
which rules out densities with algebraic tails (tails that decay
slower than $|x|$ raised to a negative power.)  By allowing for
$\varsigma$ to increase, we can accommodate densities with
algebraic tails.  We now extend this analysis to higher
dimensions.

\subsection{${\mathbb R}^n$} \label{subsec:whole}

In the subsections that follow, we will provide multivariate
extensions for both sources of compactness: growth in the
logarithmic derivative of the density $q$ and growth in the
derivative penalty $\Sigma$.  For simplicity, we will concentrate
in the case where the state space is all of $\mathbb{R}^n.$

\subsubsection{Cores} \label{subsect:core}

The compactness Condition \ref{condhigh} involves the domain of
the form $f$ which is often rather complicated to describe. For
this reason, we will focus on cases where this domain can be well
approximated by smooth functions. The adequate notion of
approximation is that of a core:

\begin{definition}  A
family of functions $\mathcal{C}o \subset \mathcal{D}(f)$ is a
{\bf core} of $f$ if for any $\phi_0$ in the domain
$\mathcal{D}(f)$, there exists a sequence $\{\phi_j\}$ in
$\mathcal{C}o$ such that
\[
\lim_{j \rightarrow \infty} <\phi_j-\phi_0, \phi_j-\phi_0> +
f(\phi_j - \phi_0, \phi_j - \phi_0) = 0.
\]
\end{definition}

\begin{condition} \label{cond:core} $C_K^2$ is a core of $f$.
\end{condition}

Let ${\hat f}$ denote the {\em minimal extension}, the smallest
closed extension of the form $f_o$ defined in equation
(\ref{edefpar}). Condition \ref{cond:core} is equivalent to $f=
{\hat f}.$

Although their purpose was different, \citet{fot} provide a
convenient sufficient condition that implies Condition
\ref{cond:core} in environments that interest us.  Define:
\[
\kappa(r) = \int_{|x|=1} x'\Sigma({\sf r}x) x q({\sf r}x) dS(x)
\]
where $dS$ is the measure (surface element) used for integration
on the sphere $|x|=1$. For functions $\psi$ and $\phi$ in
$C_{K}^{2}$ that are radially symmetric, {\it i.e.} $\phi(x) =
\xi(|x|)$ and $\psi(x) = \zeta(|x|)$, we may depict the form $f_o$
as an integral over radii:
\[
f_o(\psi,\phi) = \int_0^\infty {\frac {d \xi({\sf r})} {d{\sf r}}}
{\frac {d \zeta({\sf r})} {d{\sf r}}} \kappa({\sf r}) {\sf
r}^{n-1} dr.
\]

\begin{proposition} \label{prop:recur} Condition \ref{cond:core} is implied by:
\begin{equation}
\label{nattract} \int_1^\infty \kappa({\sf r})^{-1} {\sf
r}^{1-n}d{\sf r} = \infty .
\end{equation}
\end{proposition}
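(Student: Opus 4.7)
The plan is to show that every $\phi \in {\bar H}$ can be approximated in the norm $\|\cdot\|_{\bar f}$ by a sequence in $C_K^2$. I would carry this out in three reductions: truncate $\phi$ to a bounded function; spatially localize it using a carefully chosen radial cutoff whose form energy is controlled by hypothesis (\ref{nattract}); and finally mollify on the resulting compact support.

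The first reduction is standard. Given $\phi \in {\bar H}$, the Lipschitz truncation $\phi_N = (-N)\vee \phi \wedge N$ has weak derivative $\nabla\phi \cdot \mathbf{1}_{\{|\phi|<N\}}$ (Stampacchia's chain rule, which extends verbatim to the weighted setting of ${\bar H}$), so $\phi_N \in {\bar H}$, $f(\phi_N,\phi_N) \le f(\phi,\phi)$, and $\phi_N \to \phi$ in $\|\cdot\|_{\bar f}$ by dominated convergence. It thus suffices to approximate bounded elements of ${\bar H}$.

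For bounded $\phi$, introduce a radial cutoff $\chi_R(x)=\xi_R(|x|)$ with $\xi_R\equiv 1$ on $[0,R]$, $\xi_R\equiv 0$ on $[R',\infty)$, and absolutely continuous on $[R,R']$. The product rule in the weak sense gives
\[
\nabla(\chi_R\phi) - \nabla\phi = (\chi_R-1)\nabla\phi + \phi\,\nabla\chi_R,
\]
hence
\[
f(\chi_R\phi - \phi, \chi_R\phi - \phi) \le \int (1-\chi_R)^2 (\nabla\phi)'\Sigma(\nabla\phi)\, q + \|\phi\|_\infty^2 \, f(\chi_R,\chi_R).
\]
The first integrand vanishes as $R\to\infty$ by dominated convergence since $(\nabla\phi)'\Sigma(\nabla\phi)\,q$ is integrable. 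For the second, the radial structure and the definition of $\kappa$ yield
\[
f(\chi_R,\chi_R) = \tfrac{1}{2}\int_R^{R'} |\xi_R'(r)|^2 \kappa(r)\, r^{n-1}\, dr,
\]
and Cauchy--Schwarz, applied to $1 = \xi_R(R)-\xi_R(R') = -\int_R^{R'}\xi_R'(r)\,dr$, gives
\[
f(\chi_R,\chi_R) \ge \tfrac{1}{2}\left[\int_R^{R'} \kappa(r)^{-1} r^{1-n}\, dr\right]^{-1},
\]
with equality when $\xi_R'(r) \propto -\kappa(r)^{-1} r^{1-n}$. Hypothesis (\ref{nattract}) says that for every $R$, the inner integral on the right can be made as large as desired by choosing $R'$ large, so the optimal cutoff has arbitrarily small form energy. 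Combining with the $L^2$ convergence $\chi_R\phi \to \phi$, a sequence $R_k\to\infty$ produces $\chi_{R_k}\phi\to\phi$ in $\|\cdot\|_{\bar f}$.

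Finally, each $\chi_R\phi$ is bounded with compact support, so by Assumptions \ref{adensity} and \ref{adiffusion}, on that support $q$ is bounded away from $0$ and $\infty$ and $\Sigma$ is uniformly positive definite; the form norm is equivalent there to the standard unweighted $W^{1,2}$ norm, and ordinary mollification (as in the Brezis reference cited after the definition of $\bar H$) produces a $C_K^2$ sequence approximating $\chi_R\phi$ in $\|\cdot\|_{\bar f}$. A diagonal selection delivers a $C_K^2$ approximant of $\phi$, giving Condition \ref{cond:core}. The main obstacle is the gradient term $\phi\nabla\chi_R$ in the localization step, and (\ref{nattract}) is precisely what allows the Cauchy--Schwarz minimization of the radial profile to beat it down to zero.
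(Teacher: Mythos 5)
Your proof is correct, and it takes a genuinely different route from the paper's. The paper argues indirectly: it uses hypothesis (\ref{nattract}) to show (via the capacity calculation in \citet{fot}, Theorems 1.6.6--1.6.7) that the constant function $1$ lies in the domain of the minimal extension $\hat f$ with $\hat f(1,\cdot)=0$, and then invokes Proposition \ref{Markov} (the Azencott--Davies result, whose proof passes through $L^1$ extensions of the semigroup and a uniqueness argument for Markov generators) to conclude $\hat f = f$. Your argument bypasses the $L^1$ semigroup machinery entirely: you approximate an arbitrary $\phi\in\bar H$ directly in the form norm, via truncation (Stampacchia's rule for the weighted space, justified because the local boundedness of $q$ and $\Sigma$ under Assumptions \ref{adensity}--\ref{adiffusion} puts $\bar H\subset W^{1,2}_{loc}$), radial localization with the Cauchy--Schwarz-optimal cutoff $\xi_R'\propto-\kappa^{-1}r^{1-n}$ on $[R,R']$ (whose form energy equals $\tfrac 12\left[\int_R^{R'}\kappa^{-1}r^{1-n}\,dr\right]^{-1}$ and is killed by (\ref{nattract}) since $\kappa$ is continuous and strictly positive so $\int_R^\infty\kappa^{-1}r^{1-n}=\infty$ for every finite $R$), and mollification on the resulting compact support. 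The capacity calculation you carry out in the localization step is the same one underlying the cited FOT results, but you apply it directly to $\phi$ rather than only to the constant function; the price is that you must verify the product rule for $\chi_R\phi$ and the local equivalence of the weighted and unweighted Sobolev norms, while the gain is a self-contained, elementary proof that avoids the abstract $L^1$ uniqueness result. Two small expository points: you state the Cauchy--Schwarz bound as a lower bound on $f(\chi_R,\chi_R)$, which reads backward at first glance — the content is that you then \emph{choose} the optimizer, so the form energy equals the lower bound; and your displayed bound on $f(\chi_R\phi-\phi,\chi_R\phi-\phi)$ is missing a factor of $2$ on each term from $(a+b)'\Sigma(a+b)\le 2a'\Sigma a+2b'\Sigma b$, which is immaterial.
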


\noindent Restriction (\ref{nattract}) implies the scalar
restriction (\ref{compactcrita}) of Proposition \ref{scalar}. This
follows since for any non-negative reals ${\sf r}_1$ and ${\sf
r}_2$,
\[
\min \left\{ {\frac 1 {{\sf r}_1}}, {\frac 1 {{\sf r}_2}}\right\}
\ge {\frac 1 {{\sf r}_1+{\sf r}_2}}.
\]

Notice that (\ref{nattract}) is a joint restriction on $\Sigma$
and $q$.  We may relate this condition to the moments of $q$ and
the growth of $\Sigma$ using the inequality:
\begin{equation*}
\infty  = \left(\int_1^\infty {\frac 1 {\sf r}} d{\sf r} \right)^2
\le  \int_1^\infty \kappa({\sf r})^{-1} {\sf r}^{1-n}dr
\int_1^\infty \kappa({\sf r}) {\sf r}^{n-3}dr.
\end{equation*}
Thus a sufficient condition for (\ref{nattract}) is that
\begin{equation}
\label{nattract2} \int_1^\infty {\frac {\kappa({\sf r})} {{\sf
r}^2}} {\sf r}^{n-1} d {\sf r} < \infty.
\end{equation}

This latter inequality displays a tradeoff between growth in the
penalization matrix and moments of the distribution. Define
\[
\varsigma^2({\sf r}) = \sup_{|x|=1} x' \Sigma({\sf r}x) x,
\]
and
\[
\varrho({\sf r}) = \int_{|x| = 1} q({\sf r}x) dS(x).
\]
Notice that
\[
\kappa({\sf r}) \le \varsigma^2({\sf r}) \varrho({\sf r}).
\]
Suppose for instance, $\varsigma^2({\sf r})$ is dominated by a
quadratic function (in ${\sf r}$). Then (\ref{nattract2}) and
hence (\ref{nattract}) are satisfied because the density $q$ is
integrable:
\[
\int_0^\infty \varrho({\sf r}) {\sf r}^{n-1} d{\sf r} = 1.
\]

We may extend the previous argument by supposing instead that
\[
\varsigma^2({\sf r}) \le c |{\sf r}|^{2+2\delta}
\]
for some positive $\delta$.  Then
\[
{\frac {\kappa({\sf r})} {{\sf r}^2}} \le c {\sf r}^{2\delta}
\int_{|x| = 1} q({\sf r}x) dS(x).
\]
Thus (\ref{nattract2}) is satisfied provided that
\[
\int |x|^{2\delta} q(x) dx < \infty.
\]
Hence we can allow for faster growth in $\varsigma^2$ if $q$ has
high enough moments.

So far we have produced a sufficient condition for approximation
using functions in $C^2_K$ (Condition \ref{cond:core}).  We provide sufficient
conditions for the original compactness condition (Condition \ref{condhigh}) by transforming
the probability measure.

\subsubsection{Transforming the Measure}

In this subsection we map the original probability  space $L^2$
into a Lebesgue counterpart $L^2(leb).$   The transformation is
standard (see \citet{davies89}), but it is often applied in the
reverse direction. By using this transformation we may appeal to
some existing mathematical results on compactness to establish
Criterion \ref{condhigh},
\[
{\mathcal U}_\theta = \{\phi \in {\mathcal D}(f) : f(\phi,\phi) +
\theta <\phi,\phi>  \le 1 \}
\]
is precompact in $L^2$ for some $\theta > 0$.

Given $q$ write:
\[
q^{1/2} = \exp(-h).
\]

\begin{assumption} \label{adensitys}
The function $h$ is twice continuously differentiable.
\end{assumption}
\noindent This assumption imposes some extra smoothness on the
density, that was not required in our previous analysis.

Map the space $L^2$ into $L^2(leb)$ by the (invertible) unitary
transformation:
\[
\psi = U \phi \equiv \exp(-h) \phi.
\]
Since $U$ is unitary, it suffices to show that $U({\mathcal
U}_\theta)$ is pre-compact. We will actually construct a set that
contains $U({\mathcal U}_\theta)$ and is pre-compact in
$L^2(leb)$.

First notice that  $U$ and $U^{-1}$ leave $C_K^2$ invariant, and
for any $\psi \in C^2_K$ the corresponding $\phi=U^{-1} \psi$
satisfies:

\[
\nabla \phi =  \exp(h)(\psi \nabla h  + \nabla \psi ).
\]
Thus
\begin{eqnarray*}
{f}(U^{-1}\psi,U^{-1}\psi^*) &  = & {\frac 1 2} \int (\nabla
\psi)'
 \Sigma (\nabla \psi^*)
+  {\frac 1 2} \int (\nabla h)' \Sigma [\nabla (\psi \psi^*)] \cr
&& + {\frac 1 2} \int (\nabla h)' \Sigma (\nabla h) \psi \psi^*
\end{eqnarray*}
Applying integration-by-parts to $\psi \in C_K^2$, it follows that
\[
 \int (\nabla h)' \Sigma [\nabla (\psi \psi^*)] = -
 \int \sum_{i,j} \sigma_{i,j} {\frac {\partial^2 h}
{\partial y_i \partial y_j}} \psi \psi^* -  \int \sum_{i,j} {\frac
{\partial \sigma_{i,j}} {\partial y_i}} {\frac {\partial h}
{\partial y_j}} \psi \psi^*.
\]
Therefore,
\begin{equation}
\label{formn} {f}(U^{-1}\psi,U^{-1}\psi^*) = {\frac 1 2} \int
(\nabla \psi)' \Sigma (\nabla \psi^*) + {\frac 1 2} \int V \psi
\psi^*
\end{equation}
where the {\it potential} function $V$ is given by:
\begin{equation}
\label{potentialf} V = -  \sum_{i,j} \sigma_{i,j} {\frac
{\partial^2 h} {\partial y_i \partial y_j}} - \sum_{i,j} {\frac
{\partial \sigma_{i,j}} {\partial y_i}} {\frac {\partial h}
{\partial y_j}} + (\nabla h)' \Sigma (\nabla h).
\end{equation}

\begin{proposition} \label{prop:weaktrans}
Suppose that $C_K^2$ is a core for $f$, $\psi = U \phi$ for some
$\phi \in {\bar H}$ and $V$ is bounded from below.  Then $\psi$ is
weakly differentiable,
\[
\nabla \psi =  \exp(-h)(-\phi \nabla h  + \nabla \phi )
\]
and
\begin{equation} \label{weakbound} {\frac 1 2} \int (\nabla \phi)' \Sigma  \nabla \phi
q =  {\frac 1 2}\int (\nabla \psi)'\Sigma (\nabla \psi)+ {\frac 1
2}\int V \psi^2 .\end{equation}
\end{proposition}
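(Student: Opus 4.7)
The plan is to approximate $\phi\in\bar H$ by $C_K^2$ functions and pass to the limit in the pointwise identity (\ref{formn}). By the core hypothesis, pick $\phi_n\in C_K^2$ with $\phi_n\to\phi$ in the graph norm $<\cdot,\cdot>_{\bar f}$, and set $\psi_n=U\phi_n=\exp(-h)\phi_n\in C_K^2$ (using $h\in C^2$). Unitarity of $U$ gives $\psi_n\to\psi$ in $L^2(leb)$, and each $\psi_n$ has the classical gradient $\nabla\psi_n=\exp(-h)(-\phi_n\nabla h+\nabla\phi_n)$, which is exactly the candidate weak derivative of $\psi$ with $\phi$ replaced by $\phi_n$.

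The first step is to establish global $L^2(leb)$ convergence of the kinetic terms. Applying (\ref{formn}) to $(\phi_n-\phi_m,\phi_n-\phi_m)$ gives
\[
f(\phi_n-\phi_m,\phi_n-\phi_m)=\frac{1}{2}\int(\nabla(\psi_n-\psi_m))'\Sigma\nabla(\psi_n-\psi_m)+\frac{1}{2}\int V(\psi_n-\psi_m)^2.
\]
Since $V\ge -C$, the last integral exceeds $-(C/2)\|\psi_n-\psi_m\|_{L^2(leb)}^2$, and both $f(\phi_n-\phi_m,\phi_n-\phi_m)$ and $\|\psi_n-\psi_m\|_{L^2(leb)}$ tend to zero, so the kinetic integral does too. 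Hence $\Lambda\nabla\psi_n$ is Cauchy in $L^2(leb,\mathbb{R}^n)$, converging to some $G$. On any bounded open set $K$ with closure in $\Omega$, $\Sigma$ is uniformly positive definite and $q$ is bounded below by a positive constant, so $\nabla\phi_n\to\nabla\phi$ in $L^2(K,leb)$; combined with $\phi_n\to\phi$ in $L^2(K,leb)$, this yields $\nabla\psi_n\to\exp(-h)(-\phi\nabla h+\nabla\phi)$ in $L^2(K,leb)$. Testing integration by parts against $\eta\in C_K^1$ and passing to the limit identifies $\psi$ as weakly differentiable with gradient $\exp(-h)(-\phi\nabla h+\nabla\phi)$, and a posteriori $G=\Lambda\nabla\psi$.

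It remains to pass to the limit in (\ref{formn}) at $(\phi_n,\phi_n)$: the left side tends to $f(\phi,\phi)$, and $\int(\nabla\psi_n)'\Sigma\nabla\psi_n=\|\Lambda\nabla\psi_n\|_{L^2(leb)}^2\to\|\Lambda\nabla\psi\|_{L^2(leb)}^2=\int(\nabla\psi)'\Sigma\nabla\psi$. Hence $\int V\psi_n^2$ converges to a finite limit $L$, and (\ref{weakbound}) reduces to the identification $L=\int V\psi^2$. This last step is the main technical obstacle, since $V$ need not be bounded above. I would handle it by two-sided truncation. One direction is Fatou applied to the non-negative integrand $(V+C)\psi_n^2$ along an a.e.\ convergent subsequence: $\int V\psi^2\le L$. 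For the reverse, set $V_k=\min(V,k)$, which is bounded. Since $\|\psi_n^2-\psi^2\|_{L^1(leb)}\le\|\psi_n-\psi\|_{L^2(leb)}\|\psi_n+\psi\|_{L^2(leb)}\to0$ by Cauchy--Schwarz, boundedness of $V_k$ yields $\int V_k\psi_n^2\to\int V_k\psi^2$ for each $k$. Because $V\ge V_k$ pointwise,
\[
L=\lim_n\int V\psi_n^2\ge\lim_n\int V_k\psi_n^2=\int V_k\psi^2,
\]
and monotone convergence on $V_k+C\nearrow V+C\ge0$ sends $\int V_k\psi^2\to\int V\psi^2$ as $k\to\infty$. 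Combining the two inequalities gives $L=\int V\psi^2$ and completes (\ref{weakbound}).
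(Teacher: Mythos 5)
Your proof is correct, and it is essentially the same overall strategy as the paper's (approximate $\phi$ by $C_K^2$ functions using the core hypothesis, apply the identity (\ref{formn}) along the sequence, pass to the limit, then verify the weak derivative by integration by parts), but it differs in one interesting technical respect. The paper fixes $\theta>0$ with $V+\theta\ge 0$ and introduces the weighted Sobolev-type Hilbert space $\check H$ whose norm squared is $\int (V+\theta+1)\psi^2 + \int(\nabla\psi)'\Sigma(\nabla\psi)$; the pointwise identity (\ref{formn}) applied to differences then says exactly that $\psi_j=U\phi_j$ is Cauchy in $\check H$, so \emph{both} the kinetic term and the $(V+\theta+1)$-weighted $L^2$ term converge simultaneously, and (\ref{weakbound}) drops out of norm continuity. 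You instead control only the kinetic term from (\ref{formn}) (using the lower bound on $V$ to dominate the potential contribution by $C\|\psi_n-\psi_m\|_{L^2(leb)}^2$), identify the weak gradient by local convergence plus the usual integration-by-parts test, and then must recover $\lim_n\int V\psi_n^2 = \int V\psi^2$ by hand through a two-sided Fatou/truncation argument, since you never establish $\psi_n\to\psi$ in the $(V+\theta)$-weighted $L^2$ norm. Both routes work; the paper's auxiliary Hilbert space buys you the potential-term convergence for free, while yours is more elementary (no new space to verify as complete) at the cost of the extra truncation step. One small thing worth stating explicitly in your write-up is that $\int g'\Sigma g<\infty$ (required for $g$ to qualify as the weak gradient in the sense used in the paper) follows from $\Lambda\nabla\psi_n\to G$ in $L^2(leb)$ and $G=\Lambda g$ a.e.; you note this only as an aside ("a posteriori $G=\Lambda\nabla\psi$"), but it is a needed part of the verification.
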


A consequence of this proposition is that
\[
{\mathcal V}_\theta = \{ \psi \in L^2(leb) : \int \left( \theta  +
{\frac 1 2} V \right) \psi^2 + {\frac 1 2} \int (\nabla
\psi)'\Sigma (\nabla \psi) \le 1 \} \supset U({\mathcal
U}_\theta),
\]
and it thus suffices to show that ${\mathcal V}_\theta$  is
precompact in $L^2(leb)$ for some $\theta>0$.

We consider two methods for establishing that this last property
is satisfied. We first focus on the behavior of the potential $V$
used in the quadratic form: $ \int (\theta + {\frac 1 2} V)
\psi^2$, and then we study extensions that exploit growth in the
derivative penalty matrix $\Sigma$ used in the quadratic form:
$\int (\nabla \psi)'\Sigma (\nabla \psi)$.

\subsubsection{Divergent Potential}

In this section, we use the tail behavior of the potential $V$. To
simplify the treatment of the term $\int (\nabla \psi)'\Sigma
(\nabla \psi)$ in the definition of ${\mathcal V}_\theta$   we
impose:

\begin{assumption} \label{adiffusionb}
The derivative penalty matrix $\Sigma \ge {\underline {\sf c}} I$
for some ${\underline {\sf c}} > 0$.
\end{assumption}
\noindent This assumption rules out cases in which the derivative
penalty matrix diminishes to zero for arbitrarily large states.

We also suppose that the potential function diverges at the
boundary:

\begin{criterion} \label{ccompact}
$\lim_{|x|\rightarrow \infty}  V(x) = + \infty$.
\end{criterion}

\begin{proposition} \label{cllevel}
Under Assumptions \ref{adensitys} and \ref{adiffusionb}, if
Criterion \ref{ccompact} is satisfied, then Condition
\ref{condhigh} is satisfied.
\end{proposition}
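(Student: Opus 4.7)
\medskip
\noindent\textbf{Proof proposal.} The plan is to carry the problem over to $L^{2}(leb)$ via the unitary transformation $U$ of the preceding subsection and then establish precompactness of the transformed set by combining a uniform $H^{1}_{\mathrm{loc}}$ bound with equi-tightness coming from the divergent potential. Because $U$ is unitary, it suffices to exhibit a precompact superset of $U(\mathcal{U}_{\theta})$ in $L^{2}(leb)$. Proposition \ref{prop:weaktrans} does exactly this, provided $V$ is bounded from below; but under Assumptions \ref{adensitys} and \ref{adiffusionb} the potential $V$ defined by (\ref{potentialf}) is continuous on $\mathbb{R}^{n}$, and Criterion \ref{ccompact} forces $V(x)\to +\infty$ as $|x|\to\infty$, so $V$ is indeed bounded below. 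Choose $\theta>0$ large enough that $\theta+V/2\ge 1$ everywhere; then every $\psi\in\mathcal{V}_{\theta}$ satisfies
\[
\int \psi^{2}\le 1,\qquad \tfrac{\underline{\sf c}}{2}\int |\nabla\psi|^{2}\le \tfrac{1}{2}\int (\nabla\psi)'\Sigma(\nabla\psi)\le 1,
\]
using Assumption \ref{adiffusionb}. Hence $\mathcal{V}_{\theta}$ is bounded in $H^{1}(\mathbb{R}^{n})$.

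The second step exploits the divergence of $V$ to establish equi-tightness. Given $\varepsilon>0$, use Criterion \ref{ccompact} to pick $R=R(\varepsilon)$ so that $V(x)/2\ge 1/\varepsilon$ whenever $|x|>R$. Since $\int (\theta+V/2)\psi^{2}\le 1$ and the integrand is nonnegative outside $B_{R}$, this gives
\[
\int_{|x|>R}\psi^{2}\le \varepsilon\qquad\text{for every }\psi\in\mathcal{V}_{\theta}.
\]
Thus the family $\mathcal{V}_{\theta}$ is uniformly tight in $L^{2}(leb)$.

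Finally I would combine tightness with local compactness. The boundedness of $\mathcal{V}_{\theta}$ in $H^{1}(B_{R})$ together with the Rellich--Kondrachov embedding $H^{1}(B_{R})\hookrightarrow L^{2}(B_{R})$ implies that the restrictions $\{\psi|_{B_{R}}:\psi\in\mathcal{V}_{\theta}\}$ are relatively compact in $L^{2}(B_{R})$ for each fixed $R$. A standard diagonal argument then produces, from any sequence in $\mathcal{V}_{\theta}$, a subsequence converging in $L^{2}(B_{R})$ for every $R$; equi-tightness upgrades this to convergence in $L^{2}(\mathbb{R}^{n})$. Therefore $\mathcal{V}_{\theta}$ is precompact in $L^{2}(leb)$, and because $U^{-1}$ is an isometry and $\mathcal{U}_{\theta}\subset U^{-1}(\mathcal{V}_{\theta})$, the set $\mathcal{U}_{\theta}$ is precompact in $L^{2}$, which is Condition \ref{condhigh}.

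The only genuinely delicate point is the passage from the $L^{2}(leb)$ bound on $\nabla\psi$ back to Rellich's theorem: one must check that the weak gradient appearing in Proposition \ref{prop:weaktrans} is the usual Sobolev weak derivative so that the classical embedding applies. This is immediate once one verifies that $\exp(-h)(-\phi\nabla h+\nabla\phi)$ is locally in $L^{2}(leb)$, which follows from the continuity of $h$ and of $\Sigma^{-1/2}$, the uniform lower bound on $\Sigma$, and the $\mathcal{D}(f)$ bound on $\nabla\phi$. Everything else is routine bookkeeping.
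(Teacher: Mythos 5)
Your argument is correct and follows the same strategy as the paper: reduce to precompactness of $\mathcal{V}_\theta$ in $L^2(leb)$ via the unitary $U$ and Proposition \ref{prop:weaktrans}, then exploit the divergence of $V$ together with the uniform ellipticity of $\Sigma$. The paper delegates the final compactness step to the proof of Reed--Simon's Theorem XIII.67, whereas you unpack that same argument explicitly (equi-tightness from the diverging potential plus local Rellich--Kondrachov and a diagonal extraction), so the content is identical.
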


Direct verification of Criterion \ref{ccompact} may be difficult
because formula (\ref{potentialf}) is a bit complicated. However,
we may replace the $\Sigma$ by a lower bound. Given Assumption
\ref{adiffusionb}  we can always construct a twice continuously
differentiable function $\varsigma(x)$ with
\begin{equation}\Sigma(x) \ge \varsigma(x)^2 I \ge {\underline {\sf c}}
I,~~ for~ some~ {\underline {\sf c}} >
0.\label{cpotentiald}
\end{equation}
We now show how growth conditions
on $\varsigma(x)$ can help in delivering compactness.

Let:
\[L
{\check f_o}(\phi, \phi^*) = {\frac 1 2} \int  \nabla \phi(x)
\cdot \nabla \phi^*(x) \varsigma(x)^2 q(x)
\]
on the space $\mathcal{C}_K^2$. Then
\[
{\check f_o}(\phi, \phi) \le f_o(\phi, \phi).
\]
Let ${\check f}$ be the minimal extension of ${\check f_o}.$ If
$f$ is the minimal extension of $f_o,$ when equation
(\ref{cpotentiald}) holds, the domain of ${\check f}$ contains the
domain of $f.$ Applying Proposition \ref{prop:weaktrans} to
${\check f},$ it suffices to use
\[
{\check V}(x) =   \varsigma(x)^2\left( -{\rm trace}\left[{\frac
{\partial^2 h(x)} {\partial x_i \partial x_j}}\right] - {\frac {2
 \nabla \varsigma(x) \cdot \nabla h(x)} {\varsigma(x)}}  + |\nabla
h(x)|^2 \right).
\]
 in place of $V$ in demonstrating
compactness.

\begin{criterion} Equation (\ref{cpotentiald}) is satisfied and
\label{ccompact1}
\[
\lim_{|x| \rightarrow \infty} {\check V}(x) = +\infty.
\]
\end{criterion}

To derive some sufficient conditions for this criterion we
parameterize $\varsigma$ as:
\[
\varsigma(x) = \exp[v(x)].
\]
Then an alternative formula for ${\check V}$ is:
\begin{eqnarray*} {\check V}(x) & = &
-\varsigma(x)^2 {\rm trace}\left[{\frac {\partial^2 h(x)}
{\partial x_i \partial x_j}}\right] + \varsigma(x)^2| \nabla h(x)
- \nabla v(x)|^2 - \varsigma(x)^2 \nabla v(x) \cdot \nabla v(x).
\end{eqnarray*}
An alternative to Criterion \ref{ccompact1} is:

\begin{criterion}  \label{compact1} Equation (\ref{cpotentiald}) is satisfied with $\varsigma(x) = \exp[v(x)]$ and,
\begin{description}
\item[a)]
 \[
 \lim_{|x| \rightarrow \infty} {\frac {|\nabla
v(x)|} {|\nabla h(x)|}}= 0;
\]
\item[b)]
\[\
 \lim_{|x| \rightarrow \infty} \varsigma(x)^2\left( -{\rm
trace}\left[{\frac {\partial^2 h(x)} {\partial x_i \partial
x_j}}\right] +  \nabla h(x) \cdot \nabla h(x) \right) = + \infty.
\]
\end{description}
\end{criterion}

\begin{proposition} Suppose Assumptions \ref{adensitys} is satisfied.  Then
Criterion \ref{compact1} implies Condition \ref{condhigh}.
\end{proposition}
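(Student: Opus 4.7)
The plan is to reduce the claim to Proposition \ref{cllevel} applied to the dominated form $\check f$. By (\ref{cpotentiald}), $\Sigma\ge\varsigma^2 I\ge\underline{\sf c}\,I$ pointwise, so $\check f_o\le f_o$ on $C_K^2$; passing to minimal closed extensions yields $\mathcal{D}(f)\subset\mathcal{D}(\check f)$ with $\check f(\phi,\phi)\le f(\phi,\phi)$ on $\mathcal{D}(f)$, so the $\check f$-level set $\{\phi:\check f(\phi,\phi)+\theta\langle\phi,\phi\rangle\le 1\}$ contains its $f$-counterpart, and precompactness of the former in $L^2$ entails precompactness of the latter. The derivative penalty of $\check f$ is $\varsigma^2 I$, which satisfies Assumption \ref{adiffusionb}, so Proposition \ref{cllevel} applied to $\check f$ reduces the task to verifying that the potential $\check V$ associated with $\check f$ fulfills Criterion \ref{ccompact}, namely $\check V(x)\to+\infty$ as $|x|\to\infty$.

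The second step is a direct computation. Substituting $\varsigma=\exp(v)$ and $\nabla\varsigma/\varsigma=\nabla v$ into the formula for the potential given just before Criterion \ref{compact1}, and completing the square, one obtains
\[
\check V \;=\; \varsigma^2\bigl[-\operatorname{trace}(\partial^2 h)+|\nabla h-\nabla v|^2-|\nabla v|^2\bigr] \;=\; A-2\varsigma^2\,\nabla h\cdot\nabla v,
\]
where $A(x):=\varsigma(x)^2\bigl(-\operatorname{trace}(\partial^2 h(x))+|\nabla h(x)|^2\bigr)$ is precisely the quantity that (b) forces to diverge at infinity. What remains is to show that the cross-term $2\varsigma^2\,\nabla h\cdot\nabla v$ is asymptotically subcritical with respect to $A$.

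This is where (a) does the work. Given $c\in(0,1)$, condition (a) supplies $R_c$ such that $2|\nabla v(x)|\le(1-c)|\nabla h(x)|$ for $|x|>R_c$, and Cauchy--Schwarz then gives $|2\nabla h\cdot\nabla v|\le(1-c)|\nabla h|^2$, so
\[
\check V \;\ge\; -\varsigma^2\operatorname{trace}(\partial^2 h)+c\,\varsigma^2|\nabla h|^2 \;=\; c\,A+(1-c)\bigl(-\varsigma^2\operatorname{trace}(\partial^2 h)\bigr).
\]
When $\operatorname{trace}(\partial^2 h(x))\le 0$ the additive correction is non-negative and $\check V\ge cA\to+\infty$ directly. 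In the opposite regime, one lets $c=c(x)\uparrow 1$ at the rate at which $|\nabla v|/|\nabla h|\to 0$ in (a), so that $(1-c(x))\,\varsigma^2\operatorname{trace}(\partial^2 h)$ is of smaller order than $c(x)A$ and $\check V\to+\infty$ still follows. This last coupling of (a) with the divergence in (b) in the positive-trace regime is the main obstacle, since $\varsigma^2\operatorname{trace}(\partial^2 h)$ is not a priori controlled by $A$; the joint imposition of (a) and (b) in Criterion \ref{compact1} is exactly what makes the argument close. Once Criterion \ref{ccompact} for $\check V$ is in hand, Proposition \ref{cllevel} and the domination of the first step deliver Condition \ref{condhigh}.
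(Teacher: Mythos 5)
Your reduction strategy is sound and matches the paper's implicit argument (the paper supplies no separate proof of this proposition): dominate $f$ by the minimal extension $\check f$ of $\check f_o$, observe via (\ref{cpotentiald}) that the derivative penalty $\varsigma^2 I$ of $\check f$ satisfies Assumption \ref{adiffusionb}, verify that the corresponding potential $\check V$ diverges, and then invoke Proposition \ref{cllevel} for $\check f$ together with the inclusion of level sets. Your algebraic identity $\check V = A - 2\varsigma^2\,\nabla h\cdot\nabla v$, with $A = \varsigma^2\bigl(-{\rm trace}(\partial^2 h) + |\nabla h|^2\bigr)$ the quantity forced to diverge by (b), is correct, and the case ${\rm trace}(\partial^2 h)\le 0$ is handled rigorously: there $\varsigma^2|\nabla h|^2\le A$, so with $\epsilon(x)=|\nabla v(x)|/|\nabla h(x)|\to 0$ the Cauchy--Schwarz bound $|2\varsigma^2\nabla h\cdot\nabla v|\le 2\epsilon(x)\varsigma^2|\nabla h|^2\le 2\epsilon(x)A$ yields $\check V\ge(1-2\epsilon(x))A\to+\infty$.

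The gap you flag in the positive-trace regime is genuine, but the repair you propose does not close it. Taking $c(x)=1-2\epsilon(x)$ gives $\check V\ge c(x)A-(1-c(x))\,\varsigma^2\,{\rm trace}(\partial^2 h)$, and what is needed is $(1-c(x))\,\varsigma^2\,{\rm trace}(\partial^2 h)=o(A)$, equivalently $\epsilon(x)\,\varsigma^2|\nabla h|^2=o(A)$. Neither (a) nor (b) delivers this: if $0<{\rm trace}(\partial^2 h)<|\nabla h|^2$ along a sequence with ${\rm trace}(\partial^2 h)/|\nabla h|^2\to 1$, then $\varsigma^2|\nabla h|^2\gg A$, and should $\epsilon(x)$ decay more slowly than $A/(\varsigma^2|\nabla h|^2)$, the cross term dominates and $\check V$ need not diverge. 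You yourself note that ``$\varsigma^2\,{\rm trace}(\partial^2 h)$ is not a priori controlled by $A$,'' but then assert without argument that (a) and (b) jointly resolve this; they do not. A clean closure requires an additional hypothesis, for instance $\limsup_{|x|\to\infty}{\rm trace}(\partial^2 h(x))^{+}/|\nabla h(x)|^2<1$ (which forces $\varsigma^2|\nabla h|^2=O(A)$ and collapses the problem to the easy case), or the quantitative coupling $\epsilon(x)=o\bigl(1-{\rm trace}(\partial^2 h(x))^{+}/|\nabla h(x)|^2\bigr)$. The paper's remark that the criterion is ``convenient to check when $h$ displays polynomial growth'' tacitly operates where ${\rm trace}(\partial^2 h)=o(|\nabla h|^2)$, in which case the difficulty evaporates, but that restriction is not built into Criterion \ref{compact1} as stated, and your proof as written relies on it without establishing it.
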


\noindent Restriction b) of Criterion \ref{compact1} limits the
second derivative contribution from offsetting that of the squared
gradient of $h$. This criterion is convenient to check when $h$
displays polynomial growth, or equivalently when $q$ has
exponentially thin tails. Even if $|\nabla h|$ becomes arbitrarily
small for large $|x|$, the compactness criterion can still be
satisfied by having the penalization $\varsigma$ increase to more
than offset this decline.

Next we consider a way to exploit further growth in $\nabla
\varsigma$.  This approach gives us a way to enhance the potential
function, and may be used when $   \liminf_{|x| \rightarrow
\infty} {\frac {|\nabla v(x)|} {|\nabla h(x)|}}> 0.$ Write
\[
\int \varsigma^2  \nabla \phi \cdot \nabla \phi  = {\underbar c}
\int \nabla \phi \cdot \nabla \phi + \int (\varsigma^2 -
{\underbar {\sf c}})  \nabla \phi \cdot \nabla \phi.
\]
We now deduce a convenient lower bound on:
\[
\int (\varsigma^2 - {\underbar {\sf c}})  \nabla \phi \cdot \nabla
\phi,
\]
following an approach of \citet{davies89} (see Theorem 1.5.12).
Construct an additional potential function:
\[
 {\check W}(x) = (\varsigma^2 + {\underbar {\sf c}})(\nabla v \cdot
\nabla v) + \left(\varsigma^2 - {\underbar {\sf c}}\right) {\rm
trace}\left({\frac {\partial^2 v} {\partial x_i
\partial x_j}}\right).
\]

\begin{lemma} \label{lemma:davies}  If equation (\ref{cpotentiald})
holds, then:
\[
\int {\check W} \phi^2 \le  \int (\varsigma^2 - {\underbar {\sf c}})
\nabla \phi \cdot \nabla \phi ~~for~all~~ \phi \in C_K^2 .
\]
\end{lemma}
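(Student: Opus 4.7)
The plan is to establish the inequality by a single integration by parts against the compactly supported vector field $(\varsigma^2 - {\underbar {\sf c}})\phi^2\nabla v$, followed by an arithmetic--geometric inequality whose weights are calibrated so that the $\nabla\phi\cdot\nabla\phi$ term comes out multiplied by exactly $\varsigma^2 - {\underbar {\sf c}}$.

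First I would record the divergence identity. Because $\phi\in C_K^2$ and $\varsigma\in C^2$ (so $v=\log\varsigma\in C^2$), the vector field is $C^1$ with compact support, and the divergence theorem gives
\[
\int \nabla\cdot\bigl[(\varsigma^2-{\underbar {\sf c}})\phi^2\nabla v\bigr]\,dx = 0.
\]
Expanding the integrand, and using $\nabla(\varsigma^2) = 2\varsigma^2\nabla v$, which follows from $\varsigma = \exp v$, this identity becomes
\[
\int 2\varsigma^2\phi^2(\nabla v\cdot\nabla v) + 2\int (\varsigma^2-{\underbar {\sf c}})\phi\,\nabla\phi\cdot\nabla v + \int (\varsigma^2-{\underbar {\sf c}})\phi^2\,\mathrm{trace}\!\left(\frac{\partial^2 v}{\partial x_i\partial x_j}\right) = 0,
\]
so the cross term rearranges to
\[
-2\int (\varsigma^2-{\underbar {\sf c}})\phi\,\nabla\phi\cdot\nabla v \;=\; \int 2\varsigma^2\phi^2(\nabla v\cdot\nabla v) + \int (\varsigma^2-{\underbar {\sf c}})\phi^2\,\mathrm{trace}\!\left(\frac{\partial^2 v}{\partial x_i\partial x_j}\right).
\]

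Second I would apply $2|ab|\le a^2+b^2$ pointwise with $a = \phi\sqrt{\varsigma^2-{\underbar {\sf c}}}\,|\nabla v|$ and $b = \sqrt{\varsigma^2-{\underbar {\sf c}}}\,|\nabla\phi|$. Since $\varsigma^2\ge{\underbar {\sf c}}$ by hypothesis the weights are real and non-negative, and at any point where $\varsigma^2={\underbar {\sf c}}$ both sides of the bound vanish identically, so no regularization is needed. This yields the pointwise estimate
\[
-2(\varsigma^2-{\underbar {\sf c}})\phi\,\nabla\phi\cdot\nabla v \;\le\; (\varsigma^2-{\underbar {\sf c}})\phi^2(\nabla v\cdot\nabla v) + (\varsigma^2-{\underbar {\sf c}})(\nabla\phi\cdot\nabla\phi).
\]
Substituting into the divergence identity and collecting the $\phi^2(\nabla v\cdot\nabla v)$ coefficients using $2\varsigma^2-(\varsigma^2-{\underbar {\sf c}}) = \varsigma^2+{\underbar {\sf c}}$, the left-hand side becomes exactly $\int\check{W}\phi^2$, while the right-hand side is $\int(\varsigma^2-{\underbar {\sf c}})(\nabla\phi\cdot\nabla\phi)$. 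This is the claimed inequality.

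The only real bookkeeping is the AM--GM calibration: giving both $a$ and $b$ the common factor $\sqrt{\varsigma^2-{\underbar {\sf c}}}$ is what forces the right-hand side to read $(\varsigma^2-{\underbar {\sf c}})(\nabla\phi\cdot\nabla\phi)$ and simultaneously leaves a residual $(\varsigma^2-{\underbar {\sf c}})\phi^2(\nabla v\cdot\nabla v)$, which is precisely what must be subtracted from $2\varsigma^2\phi^2(\nabla v\cdot\nabla v)$ to produce the $\varsigma^2+{\underbar {\sf c}}$ coefficient appearing in the definition of $\check{W}$. Every other step is either integration by parts or routine algebra, so I do not anticipate a significant obstacle.
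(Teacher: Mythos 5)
Your proof is correct.  A small check: with $\varsigma = \exp(v)$ one has $\nabla(\varsigma^2)=2\varsigma^2\nabla v$, so expanding $\nabla\cdot\bigl[(\varsigma^2-{\underline {\sf c}})\phi^2\nabla v\bigr]$ does give $2\varsigma^2\phi^2|\nabla v|^2 + 2(\varsigma^2-{\underline {\sf c}})\phi\,\nabla\phi\cdot\nabla v + (\varsigma^2-{\underline {\sf c}})\phi^2\Delta v$; integrating and combining with the pointwise bound $-2(\varsigma^2-{\underline {\sf c}})\phi\,\nabla\phi\cdot\nabla v \le (\varsigma^2-{\underline {\sf c}})\phi^2|\nabla v|^2 + (\varsigma^2-{\underline {\sf c}})|\nabla\phi|^2$ (which, note, is nothing more than $(\varsigma^2-{\underline {\sf c}})\,|\phi\nabla v+\nabla\phi|^2\ge 0$ and so needs no Cauchy--Schwarz), and using $2\varsigma^2-(\varsigma^2-{\underline {\sf c}})=\varsigma^2+{\underline {\sf c}}$, you land exactly on the claimed inequality.

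The route is, however, not the one in the paper.  The paper sets $\chi = 1/\varsigma = \exp(-v)$, shows by a single integration by parts that $\int(\varsigma^2-{\underline {\sf c}})\nabla\chi\cdot\nabla\phi = \int\check W\chi\phi$ for all $\phi\in C_K^2$ (i.e.\ $\chi$ is a positive generalized solution of $-\nabla\cdot\bigl[(\varsigma^2-{\underline{\sf c}})\nabla\chi\bigr]=\check W\chi$), and then invokes Theorem~1.5.12 of Davies (1989), which delivers the inequality as an instance of the ground-state representation $\int(\varsigma^2-{\underline {\sf c}})|\nabla\phi|^2 - \int\check W\phi^2 = \int(\varsigma^2-{\underline{\sf c}})\chi^2|\nabla(\phi/\chi)|^2 \ge 0$.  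Your argument is algebraically the same ground-state substitution (your completed square $|\phi\nabla v+\nabla\phi|^2$ is exactly $\chi^2|\nabla(\phi/\chi)|^2$), but you carry it out from scratch via integration by parts against $(\varsigma^2-{\underline {\sf c}})\phi^2\nabla v$ rather than delegating to a quoted theorem.  What you buy is a short self-contained proof that exposes the mechanism; what the paper buys by citing Davies is modularity and the reassurance that the hypotheses have been checked in a standard reference, including regularity bookkeeping about the test-function class.
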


Note that
\begin{eqnarray*}
{\check V}(x) + {\check W}(x ) & = & \varsigma(x)^2 {\rm
trace}\left[{\frac {\partial^2 v(x)} {\partial x_i
\partial x_j}}- {\frac {\partial^2 h(x)} {\partial x_i \partial
x_j}}\right] + \varsigma(x)^2| \nabla h(x) - \nabla v(x)|^2 \cr &
& +{\underbar c}\left[\nabla v(x) \cdot \nabla v(x) -  {\rm
trace}\left({\frac {\partial^2 v(x)} {\partial x_i
\partial x_j}}\right)\right].
\end{eqnarray*}

\begin{criterion} \label{crit3} Equation (\ref{cpotentiald}) is satisfied for $\varsigma(x) = \exp[v(x)]$ and,

\begin{description}

\item[a)]
\[
\lim_{|x| \rightarrow \infty} \left[\nabla v(x) \cdot \nabla v(x)
- {\rm trace}\left({\frac {\partial^2 v(x)} {\partial x_i
\partial x_j}}\right)\right] = 0;
\]
\item[b)]
\[
\lim_{|x| \rightarrow \infty} \varsigma(x)^2 {\rm
trace}\left[{\frac {\partial^2 v(x)} {\partial x_i
\partial x_j}}- {\frac {\partial^2 h(x)} {\partial x_i \partial
x_j}}\right] + \varsigma(x)^2| \nabla h(x) - \nabla v(x)|^2 = +
\infty.
\]
\end{description}
\end{criterion}

\begin{proposition} Suppose Assumptions \ref{adensitys} and Condition \ref{cond:core} are satisfied.
Then Criterion \ref{crit3} implies Condition \ref{condhigh}.
\end{proposition}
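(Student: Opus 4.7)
The plan is to dominate $f$ by the simpler form $\check{f}$ built from $\varsigma^2 I \le \Sigma$, map everything to $L^2(leb)$ via the unitary $U\phi = e^{-h}\phi$, and then apply Lemma \ref{lemma:davies} to reabsorb the excess gradient weight $\varsigma^2-\underline{c}$ as an additional potential $\check{W}$. Criterion \ref{crit3} is crafted precisely so that the resulting effective potential $\check{V}+\check{W}$ is bounded below and diverges at infinity; precompactness in $L^2(leb)$ then follows by the same Schr\"odinger compactness mechanism that drives Proposition \ref{cllevel}, and unitarity of $U$ returns the conclusion to $L^2$.

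\textbf{Key inequality.} Since $\Sigma \ge \varsigma^2 I$, we have $f \ge \check{f}$ on $\mathcal{D}(f)$, so it suffices to prove precompactness of $\mathcal{A}_\theta \doteq \{\phi \in \mathcal{D}(f) : \check{f}(\phi,\phi) + \theta <\phi,\phi> \le 1\}$ in $L^2$. For $\phi \in C_K^2$, the function $\psi = U\phi$ lies in $C_K^2$ and integration by parts yields
\[
\check{f}(\phi,\phi) = \tfrac{1}{2}\int \varsigma^2|\nabla \psi|^2 + \tfrac{1}{2}\int \check{V}\psi^2 .
\]
Applying Lemma \ref{lemma:davies} to $\psi$ and splitting $\int \varsigma^2|\nabla\psi|^2 = \underline{c}\int|\nabla\psi|^2 + \int(\varsigma^2-\underline{c})|\nabla\psi|^2$ gives
\[
\check{f}(\phi,\phi) + \theta <\phi,\phi>\ \ge\ \tfrac{\underline{c}}{2}\int|\nabla\psi|^2 + \int\Big(\theta + \tfrac{1}{2}(\check{V}+\check{W})\Big)\psi^2 .
\]

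\textbf{Divergence and precompactness.} The formula for $\check{V}+\check{W}$ displayed after Lemma \ref{lemma:davies} combines with Criterion \ref{crit3}: part (a) makes the $\underline{c}(|\nabla v|^2-\mathrm{trace}(\partial^2 v))$ bracket negligible, while part (b) forces $\varsigma^2\,\mathrm{trace}(\partial^2 v - \partial^2 h) + \varsigma^2|\nabla h-\nabla v|^2 \to +\infty$. Hence $\check{V}+\check{W} \to +\infty$ as $|x|\to\infty$ and, by continuity, is bounded below globally. Therefore $U$ maps $\mathcal{A}_\theta$ into
\[
\mathcal{W}_\theta \doteq \Big\{\psi \in L^2(leb) : \tfrac{\underline{c}}{2}\int|\nabla\psi|^2 + \int \tilde V \psi^2 \le 1\Big\},\qquad \tilde V \doteq \theta + \tfrac{1}{2}(\check{V}+\check{W}),
\]
with $\tilde V \to +\infty$. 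The same compactness argument used in the proof of Proposition \ref{cllevel} --- constant diffusion $\underline{c}I$ paired with a confining potential --- shows $\mathcal{W}_\theta$ is precompact in $L^2(leb)$; unitarity of $U$ transfers precompactness back to $L^2$, giving Condition \ref{condhigh}.

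\textbf{Main obstacle.} The delicate point is extending the $C_K^2$ identity for $\check{f}$ and the $C_K^2$ Lemma \ref{lemma:davies} to all of $\mathcal{D}(f)$. One cannot simply invoke Proposition \ref{prop:weaktrans} on $\check{f}$, because $\check{V}$ alone need not be bounded below: the $-\varsigma^2|\nabla v|^2$ piece can be strongly negative when $v$ grows. Condition \ref{cond:core} supplies $C_K^2$ approximants $\phi_n$ converging to $\phi$ in the $f$-graph norm, and the trick is to combine the transformation identity with Lemma \ref{lemma:davies} \emph{before} passing to the limit. The combined potential $\check{V}+\check{W}$ is bounded below by the previous paragraph, so the Dirichlet term on the right is lower semicontinuous and the potential term obeys Fatou's lemma, allowing the key inequality to survive the limit and placing $U(\mathcal{A}_\theta)$ inside $\mathcal{W}_\theta$ as claimed.
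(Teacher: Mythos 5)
Your proof is correct and assembles exactly the ingredients the paper lays out around this result (the lower-bound form $\check f$ from (\ref{cpotentiald}), the unitary transformation $U = e^{-h}\cdot$, Lemma \ref{lemma:davies}, the displayed formula for $\check V + \check W$, and the Reed--Simon compactness argument from Proposition \ref{cllevel}); the paper itself leaves the proof implicit. Your ``main obstacle'' paragraph earns its keep: the paper's phrase ``Applying Proposition \ref{prop:weaktrans} to $\check f$'' cannot be taken at face value, since that proposition needs the potential ($\check V$, here) bounded below, and under Criterion \ref{crit3} only the combination $\check V + \check W$ is controlled. Your fix --- establish the inequality on $C_K^2$ where Lemma \ref{lemma:davies} applies directly, then pass to the limit along a core sequence using weak lower semicontinuity of the Dirichlet integral and Fatou for the (made nonnegative by choice of $\theta$) potential term --- is precisely why Condition \ref{cond:core} appears in the hypotheses of this proposition but not in the one for Criterion \ref{compact1}, and you articulate that correctly.
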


\noindent Restriction a) of Criterion \ref{crit3} limits the tail
growth of the penalization.  There are two reasons that such growth
should be limited.  The fast growth in $\Sigma$ limits the functions
that we hope to approximate using NPCs.  Also for $C_K^2$ to be a
core for the form $f$ we require limits on growth in $\Sigma$ (see
subsection \ref{subsect:core}.)

Our use of ${\check W}$ in addition to ${\check V}$ in effect
replaces $- \varsigma^2 | \nabla v|^2$ with a second derivative
term:
\[
\varsigma(x)^2 {\rm trace}\left[{\frac {\partial^2 v(x)} {\partial
x_i
\partial x_j}}\right].
\]
The following example illustrates the advantage of this
replacement.

\begin{example}
\label{compacte2} Let
\[
v(x)  =  {\frac \beta 2} \log(1 + |x|^2) + {\frac {\tilde {\sf c}}
2}
\] where ${\tilde {\sf c}} = \log {\underbar {\sf c}}$. Thus
$\varsigma$ grows like $|x|^\beta$ in the tails. Simple
calculations result in
\[
- \nabla v(x) \cdot \nabla v(x) = - \beta^2 {\frac {|x|^2}{(1 +
|x|^2)^2}},
\]
and
\[
{\rm trace} \left[{\frac {\partial^2 v(x)} {\partial x_i
\partial x_j}}\right] =  \beta \left[{\frac {n + (n-2)|x|^2}{(1 +
|x|^2)^2}}\right].
\]
Notice that both terms converge to zero as $|x|$ gets large, but
that the squared gradient scaled by $\varsigma^2$ becomes
arbitrarily large when $\beta
> 1$. The first term is always negative, but the second one is
nonnegative provided that $n \ge 2$.  Even when $n=1$ the second
term is larger than the first provided that $\beta >
1$.\footnote{We have previously established an alternative
compactness criterion for $n = 1$ that does not involve second
derivatives that may be preferred to Criterion \ref{crit3}.} This
example illustrates when Criterion \ref{crit3} is preferred to
Criterion \ref{compact1}.  The distinction can be important when
densities have algebraic tails.
\end{example}

This section contains our main existence results, which we now
summarize. We provided two criteria for constructing penalization
functions that support the existence of countable many NPCs. The
first one, Criterion \ref{compact1} gives the most flexibility in
terms of the penalization matrix $\Sigma$; but it is applicable for
densities that have relatively thin tails. Densities with algebraic
tails are precluded.  The second one, Criterion \ref{crit3}, allows
for densities with algebraic tails but requires that the
penalization be more severe in the extremes to compensate for the
tail thickness. Making the penalization more potent limits the class
of functions that are approximated. Moreover, when the penalization
is too extreme, we encounter an additional approximation problem:
the family of functions $C_K^2$ ceases to be a core for the form
used in the NPCs extraction.
%

\section{Forms and Markov processes} \label{sec:forms}

So far we considered the role of the penalization matrix $\Sigma$ in
the construction and approximation properties of NPCs.  We now use
stationary Markov diffusions  to give an explicit interpretation of
this penalization matrix.

We proceed as follows.  Suppose the data $\{ x_i \}_{i=1}^{T}$ are
generated by a Markov diffusion by sampling say at integer points in
time. Specifically, $\{ x_t :t\ge 0 \}$  solves
$$ dx_{t}=\mu (x_{t})dt+\Lambda
(x_{t})dB_{t}
$$ for some n-dimensional vector function $\mu$ and some n by n matrix function $\Lambda$
of the Markov state with appropriate boundary restrictions, where $\{B_{t}:t\geq 0\}$
is an n-dimensional, standard Brownian motion.  Suppose further that this process has $q$
as its stationary density and that $\Sigma = \Lambda \Lambda'$.  We will have more to say
in Section \ref{sec:irrev} about the restrictions on $\mu$ that are implicit in such a construction.  Let
$\phi$ be in $C_K^2$.  Then it follows from Ito's lemma that the local variance of the process
$\{ \phi(x_t) \}$ is
\[
\left(\nabla \phi\right)' \Sigma \left(\nabla \phi \right)
\]
which is state dependent.  Note that $f(\phi,\phi)$ is the average of this local variance.  The local
variance is measure of magnitude of the instantaneous forecast error in forecasting $\{\phi(x_t)\}$ over the next instant given the current Markov state.

The NPC extraction given by Definition \ref{d:sprina} can be performed
equivalently as:

\begin{definition}
The function $\psi_{j}$ is the $j^{th}$ nonlinear principal
component (NPC) for $j\geq 0$ if $\psi _{j}$ solves:
\[ \min_{\phi \in \bar{H}} f(\phi, \phi)
\]
subject to
\begin{eqnarray*}
<\phi,\phi> & = & 1 , \\
<\psi_s,\phi> & = & 0, s=0,...,j-1.
\end{eqnarray*}
\end{definition}

\noindent Thus the NPCs are extracted by making the local forecast
error (appropriately scaled) small for functions with unit second
moments plus orthogonality. It is a continuous-time counterpart to
the $( 1- R^2 )$ in a forecasting regression. Recall that the NPCs
satisfy $< \psi_j, \psi_k> = 0$ and $f(\psi_j, \psi_k) = 0$ for $j
\ne k$. These properties are nonlinear counterparts to the canonical
components in the extraction of \citet{boxtiao}. \citet{boxtiao} show
that their canonical analysis produces $k$ component series that i)
are ordered from least predictable to most predictable, ii) are
contemporaneously uncorrelated, and iii) have contemporaneously
uncorrelated forecast errors.  In verifying our counterpart to the
third property, notice that in continuous time the unpredictable
component is $\nabla \psi_j(x_t) \Lambda(x_t) dB_t$, and thus
$f(\psi_j, \psi_k)$ is the (average) local covariance of
$\psi_j(x_t)$ and $\psi_k(x_t)$.

For financial and economics applications it is important to allow
for barriers that are not attracting, and it is desirable to allow
for a non-compact state space of the Markov process. Thus imposing
uniform bounds on both $q$ and the matrix $\Sigma$ over compact
state spaces is too restrictive. Our existence results in Section
\ref{sec:exist} avoid such restrictions.

Our construction of NPCs supports the estimation and testing of
multivariate Markov diffusion models. There are other functional
principal components constructions. For instance,
\citet{dauxoisnkiet} construct nonlinear principal components for
multivariate densities by choosing pairs of functions that maximize
cross correlations without penalizing derivatives. \citet{zh} propose
$L^1$-norm constrained principal components for the purpose of
dimension reduction and variable filtering.  \citet{ramsaysilverman}
provide detailed discussions on functional principal component
analysis for IID realization of curves.

\section{Reversible diffusions} \label{sec:reverse}

We next consider how to use the form $f$ to build a Markov process.
Specifically associated with the form $f$, there is a second-order
differential operator $F$ that generates the semigroup of a Markov
diffusion.  The diffusion process has $\Sigma$ as its local
covariance matrix and $q$ as it stationary density.  The
construction of $F$ is \emph{unique} provided that we restrict the
process to be time reversible.

\subsection{A differential operator} \label{subsec:differ}

There is a differential operator $F_o$ that is associated with the
form $f_o$ (given in \ref{edefpar}), which we construct using
integration-by-parts. For any functions $\phi$ and $\psi$ in
$C_K^2$:
\begin{eqnarray}
\nonumber f_o(\phi,\psi) & = & {\frac 1 2} \int \sum_{i,j}
\sigma_{ij} {\frac {\partial \phi} {\partial y_j}}{\frac {\partial
\psi} {\partial
y_i}}q \\
& = & - {\frac 1 2} \int \sum_{i,j} \sigma_{ij} {\frac {\partial^2
\phi} {\partial y_i \partial y_j}} \psi q - {\frac 1 2} \int
\sum_{i,j} {\frac {\partial (q \sigma_{ij})} {\partial y_i}}
{\frac {\partial \phi} {\partial y_j}} \psi \label{esymmetric}
\end{eqnarray}
where the second equality of (\ref{esymmetric}) follows from the
integration-by-parts formula:
\begin{equation*}
\int \sum_{i,j} {\frac {\partial (q \sigma_{ij})} {\partial y_i}}
{\frac {\partial \phi} {\partial y_j}} \psi = - \int \sum_{i,j}
\sigma_{ij} {\frac {\partial^2 \phi} {\partial y_i \partial y_j}}
\psi q - \int \sum_{i,j} \sigma_{ij} {\frac {\partial \phi}
{\partial y_j}}{\frac {\partial \psi} {\partial y_i}}q.
\end{equation*}
We use (\ref{esymmetric}) to motivate our interest in the
differential operator $F_o$:
\begin{equation}
\label{param} F_o\phi = - {\frac 1 2} \sum_{i,j} \sigma_{ij}
{\frac {\partial^2 \phi} {\partial y_i \partial y_j}} -  {\frac 1
{2q}} \sum_{i,j} {\frac {\partial (q \sigma_{ij})} {\partial y_i}}
{\frac {\partial \phi} {\partial y_j}}.
\end{equation}
This operator is constructed so that the form $f_o$ can be
represented as:
\begin{equation*}
f_o(\phi,\psi)  =  <F_o \phi, \psi>  =  <\phi , F_o \psi>,
\end{equation*}
where the second relation
holds because we can interchange the role of $\phi$ and $\psi$ in
(\ref{esymmetric}).  Notice from (\ref{param}) that operator $F_o$
has both a first derivative term and a second derivative term.
Symmetry (with respect to $q$) is built into the construction of
this operator because of its link to the symmetric form $f_o$.

We are  interested in the operator $F_o$  because of its use in
modeling Markov diffusions. Suppose that $\{x_t : t\ge 0 \}$ solves
the stochastic differential equation: $$ dx_{t}=\mu
(x_{t})dt+\Lambda (x_{t})dB_{t}
$$ with appropriate boundary restrictions, where $\{B_{t}:t\geq 0\}$
is an n-dimensional, standard Brownian motion, and:
\begin{equation*}
\mu_j ={\frac{1}{2q}}\sum_{i=1}^n{\frac{\partial
(\sigma_{ij}q)}{\partial y_{i}}}.
\end{equation*}
Set
\[
\Sigma=\Lambda \Lambda'.
\]
Then we may use Ito's Lemma to show that for each $\phi \in C_K^2$
\[
- F_o \phi=\lim_{t\downarrow 0} {\frac {E \left[ \phi (x_t)|x_0=x
\right] - \phi(x)} {t}},
\]
where this limit is taken with respect to the $L^2$. That is, $-F_o$
coincides with the \textit{infinitesimal generator} of $\{x_t\}$ in
$C_K^2.$  We use this link to the stochastic differential equation
to motivate our use of the matrix $\Sigma$ for penalizing
derivatives.  This matrix will also be the diffusion matrix for a
continuous-time Markov process with stationary density $q$.

\subsection{Generating reversible diffusions}
\label{subsec:reversible}

\citet{wong} constructed scalar diffusion models with stationary
densities  in the Pearson class and  produced a spectral or
eigenfunction decomposition of the associated one-parameter
semigroup of conditional expectation operators. \citet{banon} and
\citet{ckc} extended this analysis in part by taking as given the
diffusion coefficient and constructing the implied drift coefficient
for the stochastic differential equation that generates a prescribed
stationary density. \citet{banon} and \citet{ckc} did not analyze the
implied spectral decomposition of the associated conditional
expectation operators. In all  these analyses, the stationary
density of the diffusion process is taken as one of the starting
points of a model builder. In this section we share \citet{banon}'s
aim for generality, but at the same time we retain \citet{wong}'s
interest in spectral decompositions.

As in \citet{wong,banon,ckc}, we parameterize diffusion processes
using the stationary density $q$ and a (possibly state dependent)
diffusion coefficient $\Sigma$ in contrast to the more typical
approach of starting with a drift and the diffusion coefficients.
In contrast to \citet{wong,banon,ckc}, we allow the diffusion
process to be multivariate on a state space $\Omega.$   For this to result in
a unique diffusion, we require that the diffusion be time reversible.

A stochastic process is time reversible if its forward and
backward transition probabilities are the same.   Multivariate
reversible diffusions can be parameterized directly by the pair
$(q,\Sigma)$. Associated with the closed extension $f$ is a family
of resolvent operators $G_\alpha$ indexed by a positive parameter
$\alpha$. We use the resolvent operators to build a semigroup of
conditional expectation operators for a Markov process, and in
particular, the generator of that semigroup.

For any $\alpha>0$, the resolvent operator $G_\alpha$ is
constructed as follows. Given a function $\phi \in L^2$, define
$G_\alpha \phi \in {\mathcal D}(f)$ to be the solution to
\begin{equation}
\label{resolve} f(G_\alpha \phi,\psi) + \alpha<G_\alpha \phi,\psi>
= <\phi,\psi>
\end{equation}
for all $\psi \in {\cal D}(f)$.  The Riesz Representation Theorem
guarantees the existence of the $G_\alpha \phi$.  This family of
resolvent operators is known to satisfy several convenient
restrictions ({\it e.g.} see \citet{fot} pages 15 and 19). In
particular, $G_\alpha$ is a one-to-one mapping from $L^2$ into
$G_\alpha(L^2)$.

We associate with the form $f$ the self-adjoint, positive
semidefinite operator:
\begin{equation}
\label{operator} F \phi = (G_\alpha)^{-1} \phi - \alpha \phi
\end{equation}
defined on the domain $G_\alpha(L^2)$. It can be shown that $F$ is
independent of $\alpha$.  Since the operator $F$ is
self-adjoint and positive semidefinite, we may define a unique
positive semidefinite square root $\sqrt{F}$.   While $F$ may only
be defined on a reduced domain, the domain of its square root may be
extended uniquely to the entire space ${\mathcal D}(f)$ and: $
f(\phi,\psi) = <\sqrt{F} \phi, \sqrt{F} \psi> $ ({\it e.g.} see
\citet{fot} Theorem 1.3.1).  Moreover, it is an extension of the
operator $F_o$ because $f$ is an extension of $f_o$ ({\it e.g.} see
Lemma 3.3.1 of \citet{fot}).

We also use the family of resolvent operators to build a semigroup
of conditional expectation operators. A natural candidate for this
semigroup is $\{ \exp(-tF) : t \ge 0 \}$. Formally, the expression
$\exp(-tF)$ is not well defined as a series expansion. However,
for any $\alpha$ and any $t$, we may form the exponential:
\[
\exp(t\alpha^2 G_\alpha - \alpha t I)
\]
as a Neumann series expansion. Notice that (\ref{operator})
implies
\begin{equation*}
t \alpha^2 G_\alpha - t \alpha I  =  t\alpha[(I + {\frac 1 \alpha}
F)^{-1} - I]  =  -tF \left( I + {\frac 1 \alpha} F \right)^{-1}.
\end{equation*}
Instead directly using a series expansion, we use the limit
\[
\lim_{\alpha \rightarrow \infty}\exp[(t\alpha^2 G_\alpha) - \alpha
t I]= \exp(-tF)
\]
often referred to as Yosida approximation to construct formally a
strongly continuous, semigroup of operators indexed by $t \ge
0$.

We have just seen how to construct resolvent operators  and the
semigroup of conditional expectation operators from the form.  We
may {\it invert} this latter relation and obtain:
\begin{equation}
G_\alpha \phi = \int_0^\infty \exp(-\alpha t) \exp(-tF)\phi dt
\label{resolve1}
\end{equation}
which is  the usual formula for the resolvents of a semigroup of
operators. The operator $-F$ is referred to as the generator of
both the semigroup $\{ \exp(-tF) : t \ge 0\}$ and of the family of
resolvent operators $\{ G_\alpha : \alpha > 0 \}$.

As we have just seen, associated with a closed form $f$, there is an
operator $F$ and a (strongly continuous) semigroup $\{ \exp(-tF) : t
\ge 0\}$ on $L^2$.  To establish that there is a Markov process
associated with this semigroup, we need first to verify that the
semigroup satisfies two properties.  First we require, for each $t
\ge 0$ and each $0 \le \phi \le 1$ in $L^2$, $0 \le \exp(-tF)\phi
\le 1$. A semigroup satisfying this property is called {\it
submarkov} in the language of \citet{beurlingdeny}. Second we
require, for each $t \ge 0$, $\exp(-t F)1 = 1$. A semigroup
satisfying this property is said to {\it conserve probabilities}. We
refer to a submarkov semigroup that conserves probabilities as a
Markov semigroup. Finally we must make sure that the
Markov semigroup is actually the family of conditional expectation
operators of a Markov process.

The following condition is sufficient for a closed form to
 generate a submarkov semigroup (\textit{e.g.}, see \citet{davies89} section
 1.3).
\begin{condition} (Beurling-Deny) \label{beurlingdeny}
For any $\phi \in {\cal D}(f)$, $\psi$ given by the truncation:
\[
\psi = (0 \vee \phi) \wedge 1
\]
is in ${\cal D}(f)$ and
\[
f(\psi, \psi) \le f(\phi, \phi).
\]
\end{condition}
\noindent When this condition is satisfied, the semigroup
$\exp(-tF)$ is submarkov, and for each $t \ge 0$, $\exp(-t F)$ is an
$L^2$ contraction ($\|\exp(-t F ) \phi \|_2 \le \| \phi \|_2$). This
contraction property is also satisfied for the $L^p$ norm for $1 \le
p \le \infty$ (\citet{davies89} Theorem 1.3.3).
  In particular, we may extend the semigroup from
$L^2$ to $L^1$ while preserving the contraction property.

\begin{proposition}\label{rev-markov} There exists a self adjoint operator $F$
associated with $f$, which is an extension of $F_o$ and generates a
semigroup $\{ \exp(- tF) : t \ge 0 \}$.  The density $q$ is the
stationary density for this diffusion, the matrix $\Sigma$ is the
diffusion matrix and $\exp(-tF)$ is the conditional expectation
operator over an interval of time $t$.
\end{proposition}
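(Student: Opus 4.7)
The plan is to assemble the pieces already laid out in Section~\ref{sec:reverse} and to fill in the Markovian interpretation. Since $f$ is closed on $\mathcal{D}(f) = \bar H$, the Riesz representation argument preceding equation~(\ref{resolve}) delivers the resolvent family $G_\alpha$, and the operator $F = G_\alpha^{-1} - \alpha I$ defined via~(\ref{operator}) is self-adjoint and non-negative on the range $G_\alpha(L^2)$. That $F$ extends $F_o$ follows from the integration-by-parts identity $f_o(\phi,\psi) = \langle F_o\phi, \psi\rangle$ for $\phi,\psi \in C_K^2$ (cited from Lemma 3.3.1 of \citet{fot}). By the functional calculus for self-adjoint operators one then obtains the strongly continuous semigroup $\{\exp(-tF) : t \ge 0\}$, which coincides with the Yosida limit and satisfies~(\ref{resolve1}).

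Next I would verify the Beurling–Deny Condition~\ref{beurlingdeny}. For $\phi \in \bar H$ the truncation $\psi = (0 \vee \phi) \wedge 1$ has weak gradient equal to $\nabla\phi$ on $\{0 < \phi < 1\}$ and zero elsewhere, so $\psi \in \bar H$ and
\[
f(\psi,\psi) = \tfrac{1}{2}\!\int_{\{0<\phi<1\}} (\nabla\phi)'\Sigma(\nabla\phi)q \le f(\phi,\phi),
\]
giving Condition~\ref{beurlingdeny} and thus a submarkov $L^2$-semigroup that extends to an $L^p$-contraction for $1 \le p \le \infty$. To obtain conservation of probabilities note that $\psi_0 \equiv 1$ lies in $\mathcal{D}(f)$ (as remarked after the definition of $\bar H$) with $\nabla 1 = 0$, so $f(1,\psi) = 0$ for all $\psi \in \mathcal{D}(f)$; consequently $\alpha G_\alpha 1 = 1$ and passing to the Yosida limit yields $\exp(-tF)1 = 1$. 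Symmetry of $\exp(-tF)$ in $L^2 = L^2(Q)$ together with this conservation property immediately gives stationarity of $q$: for any $\phi \in L^2$,
\[
\int \big[\exp(-tF)\phi\big]\, q\, dx = \langle \exp(-tF)\phi, 1\rangle = \langle \phi, \exp(-tF)1\rangle = \int \phi\, q\, dx.
\]

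The remaining task is to realize the Markov semigroup as the family of conditional expectation operators of an honest diffusion and to identify $\Sigma$ as its local covariance. For this I would appeal to the standard Dirichlet form construction of a Hunt process from a regular, symmetric, strongly local, conservative Dirichlet form (\citet{fot}, Chapter~7); under Assumptions~\ref{adensity}--\ref{adiffusion} the form $f$ is strongly local, and $C_K^2 \subset \mathcal{D}(f)$ is a dense subalgebra, so the associated process $\{x_t\}$ is continuous. For the local covariance, Ito's lemma applied to $\phi(x_t)\psi(x_t)$ with $\phi,\psi \in C_K^2$ yields
\[
\lim_{t \downarrow 0}\tfrac{1}{t}\, E\!\left[(\phi(x_t)-\phi(x_0))(\psi(x_t)-\psi(x_0))\,\big|\,x_0 = x\right] = (\nabla\phi)'\Sigma(\nabla\psi)(x),
\]
which follows from the generator identity $-F\phi = -F_o\phi$ on $C_K^2$ and the carr\'e du champ computation $\Gamma(\phi,\psi) = \tfrac{1}{2}[F(\phi\psi) - \phi F\psi - \psi F\phi] = (\nabla\phi)'\Sigma(\nabla\psi)/2$. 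Choosing $\phi(x) = x_i, \psi(x) = x_j$ locally (after truncation by $C_K^2$ cutoffs) identifies the diffusion matrix as $\Sigma$.

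The main obstacle will be the passage from the abstract Markov semigroup on $L^2$ to a \emph{pointwise} diffusion: strictly speaking one needs regularity of the form and the Dirichlet-form machinery of \citet{fot} to produce a Hunt process whose transition semigroup is a quasi-continuous version of $\exp(-tF)$. I would state this step by invoking their existence theorem, noting that our smoothness assumptions on $q$ and $\Sigma$ are more than sufficient for regularity and strong locality, and defer the delicate quasi-everywhere qualifications. Identifying the generator on $C_K^2$ with $F_o$ (already an ingredient above) then ensures that the constructed process satisfies the SDE described in Section~\ref{sec:forms} in the usual martingale-problem sense.
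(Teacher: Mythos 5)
Your proposal takes essentially the same route as the paper's proof: both establish that $f$ is a conservative Dirichlet form (the paper cites \citet{davies89} Theorem 1.3.5 for the Beurling--Deny property where you verify it directly via the weak gradient of the truncation; both use $1 \in \mathcal{D}(f)$ with $f(1,\cdot)=0$ for conservation of probability) and both invoke \citet{fot} Theorem 7.2.1 to produce the Hunt process whose transition semigroup is $\exp(-tF)$. Your additional material — the explicit stationarity argument from symmetry plus conservativeness, and the carr\'e du champ identification of $\Sigma$ — are correct details that the paper leaves to the earlier discussion in Section \ref{subsec:differ} rather than repeating in the proof.
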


%

\subsection{Nonlinear principal components and eigenfunctions}

Continuous time Markov process models are typically specified in
terms of their local dynamics.  Given the nonlinearity in the state
variables, it is a nontrivial task to infer the global dynamics, and
in particular the long-run behavior from this local specification.
Characterizing eigenfunctions  of conditional expectation operators
offer a way of approximating intermediate and long term dynamics in
ways that are typically disguised from the local dynamics in
nonlinear settings.

Eigenfunctions of the closed form $f$ will also be eigenfunctions
of the resolvent operators $G_\alpha$ and of the generator $F$.
For convenience, we rewrite equation (\ref{resolve}):
\begin{equation*}
f(G_\alpha \phi,\psi) + \alpha<G_\alpha \phi,\psi> = <\phi,\psi>.
\end{equation*}
From this formula, we may verify that $f$ and $G_\alpha$ must share
eigenfunctions for any $\alpha
> 0$. The eigenvalues are related via the formula:
\begin{equation*}
\lambda = {\frac 1 {\delta + \alpha}}
\end{equation*}
where $\lambda$ is the eigenvalue of $G_\alpha$ and $\delta$ is
the corresponding eigenvalue of $f$.

Given the relation between the generator $F$ and the resolvent
operator $G_\alpha$,
\[
F\phi  = (G_\alpha)^{-1}\phi  - \alpha \phi,
\]
these two operators must share eigenfunctions. Moreover,
eigenfunctions of the operators $F$, $G_\alpha$ and the form $f$
must belong to the domain of $F$ or equivalently to the image of
$G_\alpha$. This domain is contained in the domain of the form $f$.
Similarly, we may show that if $\phi$ is an eigenfunction of the
form $f$ with eigenvalue $\delta$, then $\phi$ is an eigenfunction
of $\exp(-tF)$ with eigenvalue $\exp(-t \delta)$ for any positive
$t$.

An eigenfunction $\psi$ of the generator $F$ satisfies:
\begin{equation}
\label{tmoment} E[\psi(x_{t+s}) | x_t] = \exp(-\delta s)
\psi(x_t),
\end{equation}
for some positive number $\delta$ and each transition interval $s$.
Thus the NPCs described previously will also satisfy the testable
conditional moment restrictions (\ref{tmoment}). The scalar process
$\{ \psi(x_t) \}$ should behave as a scalar autoregression with
autoregressive coefficient $\exp(-\delta s)$ for sample interval
$s$.  The forecast error: $\psi(x_{t+s}) - \exp(-\delta s)
\psi(x_t)$ will typically be conditionally heteroskedastic (have
conditional variance that depends on the Markov state $x_t$).

Since the form $f$ can be depicted using a principal component
decomposition as in (\ref{eigen:fdecom}), analogous decompositions
are applicable to $F$ and $\exp(-tF)$:
\begin{eqnarray*}
F\phi & = & \sum_j \delta_j{\frac { < \phi, \psi_j>}{ <\psi_j,
\psi_j>}} \psi_j , \cr \exp(-tF)\phi & = & \sum_j \exp( - t
\delta_j)
 {\frac {< \phi, \psi_j>}{<\psi_j, \psi_j>}} \psi_j ,
\end{eqnarray*}
where the first expansion is only a valid $L^2$ series when $\phi$
is in the domain of the operator $F$. When the eigenvalues
$\delta_j$ of the form increase rapidly (in $j$), the term $\exp( -
t \delta_j)$ will decline to zero rapidly (in $j$), more so when the
time horizon $t$ becomes large. As a consequence, it becomes easier
to approximate the conditional expectation operator over a finite
transition interval $t$ with a smaller number of NPCs.  On the other
hand, slow eigenvalue divergence of the form will make it
challenging to approximate the transition operators with a small
number of NPCs. Our results in Section \ref{sec:exist} give
primitive conditions based on the tail behaviors of stationary
density and diffusion matrix for the existence of the above
eigenfunction decompositions. In an earlier longer version of our
paper we provided primitive sufficient conditions for the speed of
eigenvalue decays.

\subsection{An alternative form}

In this subsection we construct a second quadratic form used to
depict the long-run variance of a stochastic processes constructed
from the Markov process $\{x_t\}$.

This quadratic form is defined to be the limit
\[
g(\phi,\psi) =2 \lim_{\alpha \downarrow 0} <G_\alpha \phi, \psi>
\]
and is well defined on a subspace ${\mathcal S}(F)$ of functions
in $L^2$ for which
\[
 \lim_{\alpha \downarrow 0} <G_\alpha \phi, \phi> < \infty.
\]
While the form $f$ is used to define the operator $F$, the form
$g$ may be used to define $F^{-1}$ as is evident from formulas
(\ref{resolve}) or (\ref{operator}).  The forms $f$ and $g$ share
eigenfunctions.  The $g$ eigenvalues are the reciprocals of the
$f$ eigenvalues.

In light of equation (\ref{resolve1})
\begin{equation}
\label{resolvelim} <G_\alpha \phi, \psi > = \int_0^\infty
\exp(-\alpha t) E[\phi(x_t) \psi(x_0)]dt.
\end{equation}
Hence, using (\ref{operator}), we obtain:
\begin{equation*}
g(\phi, \psi)  =  \lim_{\alpha \downarrow 0} 2 <G_\alpha \phi, \psi>
 =  \lim_{\alpha \downarrow 0} 2 <(\alpha I +F)^{-1} \phi, \psi>.
\end{equation*}
Notice that this form is symmetric because the resolvent operator is
self-adjoint for any positive $\alpha$.  Using (\ref{resolvelim}) we
may write this form as
\begin{equation*}
g(\phi,\psi)  =  \int_{-\infty}^{+\infty} E[\phi(x_t) \psi(x_0)]dt =
 \int_{-\infty}^{+\infty} E[\psi(x_t) \phi(x_0)]dt.
\end{equation*}

\begin{proposition} The $j^{th}$ nonlinear principal component $\psi_{j}$
for $j\geq 1$ solves:
\[ \max_{\phi \in {\mathcal S}(F)} g(\phi, \phi)
\]
subject to
\begin{eqnarray*}
<\phi, \phi> & = & 1 , \\
<\psi_s,\phi> & = & 0, s=0,...,j-1 ,
\end{eqnarray*}
where $\psi_{0}$ is initialized to be the constant function one.
\end{proposition}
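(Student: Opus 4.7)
The plan is to translate the optimization problem into the NPC basis and then reduce it to a simple weighted sum maximization. The key observation is that $F$ is self-adjoint and positive semidefinite with eigenfunctions $\{\psi_k\}$ and eigenvalues $\{\delta_k\}_{k \geq 0}$ (with $\delta_0 = 0$ corresponding to $\psi_0 \equiv 1$ and $0 < \delta_1 \leq \delta_2 \leq \cdots$), and that $G_\alpha$ shares these eigenfunctions with eigenvalues $(\delta_k + \alpha)^{-1}$, as already established in Section \ref{sec:reverse}.

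First I would write, for an arbitrary $\phi \in L^2$ with expansion coefficients $c_k = \langle \phi, \psi_k\rangle / \langle \psi_k, \psi_k\rangle$, the identity
\[
\langle G_\alpha \phi, \phi\rangle = \sum_{k \geq 0} \frac{1}{\delta_k + \alpha}\, \frac{\langle \phi, \psi_k\rangle^2}{\langle \psi_k, \psi_k\rangle},
\]
which follows from self-adjointness of $G_\alpha$ and the eigenfunction expansion of $\phi$. Next, via monotone convergence in $\alpha \downarrow 0$, I would characterize $\mathcal{S}(F)$ as the set of $\phi \in L^2$ for which the right-hand side remains finite in the limit, and conclude that for $\phi \in \mathcal{S}(F)$,
\[
g(\phi,\phi) = 2\sum_{k \geq 0} \delta_k^{-1}\, \frac{\langle \phi, \psi_k\rangle^2}{\langle \psi_k, \psi_k\rangle},
\]
with the convention that the $k=0$ term requires $\langle \phi, \psi_0\rangle = 0$ (otherwise $\phi \notin \mathcal{S}(F)$). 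Under the orthogonality constraints $\langle \psi_s, \phi\rangle = 0$ for $s = 0, \ldots, j-1$, this reduces to $g(\phi,\phi) = 2\sum_{k \geq j} \delta_k^{-1} w_k$, where $w_k = \langle \phi, \psi_k\rangle^2 / \langle \psi_k, \psi_k\rangle$, while the unit-norm constraint becomes $\sum_{k \geq j} w_k = 1$.

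Since the sequence $\{\delta_k^{-1}\}_{k \geq j}$ is non-increasing with maximum value $\delta_j^{-1}$, the maximum of $\sum_{k\geq j}\delta_k^{-1} w_k$ over nonnegative weights summing to one is $\delta_j^{-1}$, attained by placing all mass on the index $k = j$. Translating back, the maximizer is $\phi = \psi_j/\sqrt{\langle \psi_j,\psi_j\rangle}$, and the maximum value is $2/\delta_j$. To close the argument, I would verify $\psi_j \in \mathcal{S}(F)$, which is immediate because $G_\alpha \psi_j = (\delta_j + \alpha)^{-1}\psi_j$ gives $\langle G_\alpha \psi_j, \psi_j\rangle = (\delta_j+\alpha)^{-1}\langle \psi_j,\psi_j\rangle$, finite as $\alpha \downarrow 0$ since $\delta_j > 0$ for $j \geq 1$.

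The main technical obstacle is justifying the eigenfunction expansion and the monotone convergence step under the standing hypotheses of the paper: one must know that the NPCs $\{\psi_k\}_{k \geq 0}$ form a complete orthogonal system in $L^2$, which follows from the existence results of Section \ref{sec:exist} (the compactness hypothesis on sublevel sets of $f(\phi,\phi)+\theta\langle\phi,\phi\rangle$ ensures a purely discrete spectrum and a spectral basis). Once that completeness is in hand, the remainder of the proof is essentially the bookkeeping sketched above.
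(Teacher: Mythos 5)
Your proof is correct and takes exactly the route the paper leaves implicit (the proposition is stated without an accompanying proof): expand $\phi$ in the NPC eigenbasis, use monotone convergence to obtain $g(\phi,\phi)=2\sum_k \delta_k^{-1}\langle\phi,\psi_k\rangle^2/\langle\psi_k,\psi_k\rangle$ on $\mathcal{S}(F)$, and observe that under the unit-norm and orthogonality constraints this weighted sum is maximized by placing all mass on the $j$-th term since $\delta_k^{-1}$ is non-increasing. This is precisely what the paper's surrounding remark---that $f$ and $g$ share eigenfunctions with reciprocal eigenvalues---is meant to convey, and your argument supplies the missing bookkeeping, including the observation that membership in $\mathcal{S}(F)$ already forces $\langle\phi,\psi_0\rangle=0$.
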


Recall that the spectral density function at frequency $\theta$
for a stochastic process $\{\phi(x_t)\}$ is defined to be:
\[
\int_{-\infty}^{+\infty} \exp(-i \theta t) E[\phi(x_t) \phi(x_0)]
dt
\]
whenever this integral is well defined.  In particular
$g(\phi,\phi)$ is the spectral density of the process
$\{\phi(x_t)\}$ at frequency zero, a well known measure of the
long-run variance.

For an alternative but closely related defense of the term {\it long-run}
 variance, suppose that $\phi = F\psi$ for some $\psi$ in the
domain of $F$.  Then,
\[
M_T = \psi(x_T) - \psi(x_0) + \int_0^T \phi(x_s) ds
\]
is a martingale adapted to the Markov filtration. Following
\citet{bhattacharya} and \citet{hsback}, we may use this martingale
construction to justify: $$
{\frac{1}{\sqrt{T}}}\int_{0}^{T}\phi (x_{s})ds\Rightarrow \mathrm{Normal}%
\;\left(0,g(\phi,\phi)\right). $$ Thus $g(\phi,\phi)$ is the
limiting variance for the process
$\{{\frac{1}{\sqrt{T}}}\int_{0}^{T} \phi (x_{s})ds\}$ as the sample
length $T$ goes to infinity.

This gives us an alternative interpretation of our NPCs. We may base
the extraction on maximizing $g(\phi,\phi)$ subject to $<\phi,\phi>
= 1$ and orthogonality constraint. In words we are maximizing
long-run variation while constraining the overall variation.  Smooth
functions of a Markov state are also highly persistent and as a
consequence maximize long-run variation.

\section{Irreversible diffusions}\label{sec:irrev}

The stationary Markov construction we used in the previous section
resulted in a generator that was self adjoint and hence a process
that was time reversible.  Even among the class of stationary Markov
diffusions, reversibility is special when the process has multiple
dimensions.  Given a stationary density $q$ and a diffusion matrix
$\Sigma$, we have seen how to  construct a reversible diffusion, but
typically there are other diffusions that share the same density and
diffusion matrix, but not reversible. We now characterize the drifts
of such processes.

Instead of constructing a Markov process implied by a form, suppose
we have specified the process as a semigroup of conditional
expectation operators indexed by the transition interval. We suppose
this process has stationary density $q$. Following \citet{nelson} and
\citet{hsback} we study the semigroup of conditional expectation
operators on the space $L^2$.  This semigroup has a generator $A$
defined on a dense subspace of $L^2$. Consistent with our
construction of $F$, on the subspace of $C_K^2$, we suppose that $A$
can be represented as a second-order differential operator:
\[
A \phi = {\frac 1 2} \sum_{i,j} \sigma_{ij} {\frac {\partial^2 \phi}
{\partial y_i \partial y_j}} + \sum_{j} \mu_j {\frac {\partial \phi}
{\partial y_j}},
\]
and that
\[
\int A \phi q = 0.
\]
It may be shown that
\[
-\int \psi (A \phi) q = f_o(\phi,\psi)~~on~~ C_K^2 .
\]

This construction does not require that $A = -F$ or that $A$ be
self adjoint.  How can the adjoint be represented?  The adjoint
must satisfy:
\[
-\int \phi (A^* \psi) q = f_o(\phi,\psi),
\]
implying that the $F$ constructed previously must satisfy: $F = -(A
+ A^*)/2$.  Moreover, since $q$ is also the stationary density of
the reverse time process:
\[
\int A^* \phi q = 0.
\]

It follows from \citet{nelson} that the adjoint operator has the same
diffusion matrix, but a different drift vector. The drift for the
adjoint operator $A^*$ is given by:
\[
\mu^* = - \mu + {\frac 1 {q}} \sum_{i,j} {\frac {\partial (q
\sigma_{ij})} {\partial y_i}} {\frac {\partial \phi} {\partial
y_j}}.
\]
The adjoint operator generates the semigroup of expectation
operators for the reverse time diffusion.  From the formula for
reverse time drift, $\mu^*$, it follows that
\[
{\frac {\mu + \mu^*} 2} = {\frac 1 {2q}} \sum_{i,j} {\frac
{\partial (q \sigma_{ij})} {\partial y_i}} {\frac {\partial \phi}
{\partial y_j}},
\]
which is the negative of the second term in representation
(\ref{param}) for $F_o$. Thus if the generator $A$ of the
semigroup is not self adjoint, then the operator $F$ implied by
the form is a second order differential operator built using a
simple average of the forward and reverse time drift coefficients,
$\mu$ and $\mu^*$, and the common diffusion matrix, $\Sigma$.

\begin{remark}
The density $q$ and the diffusion matrix $\Sigma$ do place other
restrictions on the drift vector $\mu$.  Since $q$ is the
stationary density, $\mu$ and $\mu^*$ must also satisfy:
\begin{eqnarray*}
{\frac {\partial (\mu q) } {\partial y}} & = & {\frac {\partial}
{\partial y}} \sum_{i,j} {\frac {\partial (q \sigma_{ij})}
{\partial y_i}} {\frac {\partial \phi} {\partial y_j}}, \cr {\frac
{\partial (\mu^* q) } {\partial y}} & = & {\frac {\partial}
{\partial y}} \sum_{i,j} {\frac {\partial (q \sigma_{ij})}
{\partial y_i}} {\frac {\partial \phi} {\partial y_j}}.
\end{eqnarray*}
While there is typically one solution $\mu$ (or $\mu^*$) to this
equation for the scalar case, multiple solutions will exist for the
multivariate case.  That is, unless reversibility is imposed {\it a
priori}, the drift cannot be identified from the density and
diffusion matrix; but the average of the forward and backward drift
can be inferred.
\end{remark}

\begin{remark}

The NPCs existence results of section \ref{sec:exist} have an
immediate extension to the existence of eigenfunctions of the
semigroup of conditional expectation operators when the Markov
diffusion is not reversible. For a semigroup with generator $A$ we
may ``invert'' equation \ref{operator} to construct a family of
resolvent operators:
\[
R_\alpha \phi = \int_0^\infty \exp(-\alpha t) \exp(At) \phi dt =
(\alpha I - A)^{-1}\phi ,
\]
and a  form $f(\phi,\psi) = <\phi,A \psi>$, which is not
necessarily symmetric.  While the generator is an unbounded
operator on $L^2$, the resolvent operators are bounded.  When the
resolvents are compact operators, they have well defined
eigenfunctions and eigenvalues, but they may be complex valued.
(See \citet{rudin}, Theorem 4.25, page 108.)

Given $\alpha$ the resolvent operator will be compact provided
that the image of $R_\alpha$ of the $L^2$ unit ball has compact
closure. Consider a function $\varphi$ given by
\[
\varphi = (\alpha I - A)^{-1} \phi.
\]
Then $\phi \in {\mathcal D}(A)$ and
\[
<\phi , \phi> = \alpha^2 <\varphi, \varphi> - 2\alpha <\varphi, A
\varphi> + <A \varphi, A \varphi> \ge \alpha^2 <\varphi, \varphi> +
2\alpha f(\varphi,\varphi).
\]
Thus it suffices to show that
\[
\{ \varphi \in {\mathcal D}(A) : \alpha^2<\varphi,\varphi> +
2\alpha f(\varphi, \varphi) \le 1\}
\]
has compact closure.  This set will have compact closure, if, and
only if, compactness Condition \ref{condhigh} is satisfied for
$\theta = \alpha /2$.
\end{remark}

\section{Extraction of NPCs from data} \label{sec:consist}

In this section we suggest a sieve method to
estimate NPCs based on discretely-sampled data from a Markov process.
Here we only sketch the construction and
leave the formal justification and detailed analysis of rates of convergence
for subsequent research.

%

Let $\{x_i\}_{i=1}^{T}$ be a discrete-time sample of
the underlying continuous-time, ergodic Markov diffusion $\{x_t:t\geq
0\}$ on the state space $\Omega \subseteq \mathbb{R}^n$.\footnote{If available a continuous-time
record could be used, but statistical approximation remains an issue because the
length of the record is finite.}
Suppose that the penalization matrix $\Sigma$ is either known or
could be consistently estimated. The invariant probability measure
$Q$ is unknown but is consistently estimated by the empirical
distribution of the data $\{x_i\}_{i=1}^{T}$.

Let $\{H_{m} : m=1,2,...\}$ be a sequence of increasing
finite-dimensional linear (sieve) spaces that approximate the
Hilbert space $\bar H$ (the domain of the form $f$) as $m$ goes to
infinity. For notational convenience, let $m$ be the dimension of
$H_m$, and suppose that $m$ goes to infinity slowly as the sample
size $T$ goes to infinity. One strategy is to extract the
finite sample approximations sequentially as in optimization problem given in
Definition \ref{d:sprina}.  For a finite-dimensional sieve
approximation, it suffices to solve a generalized eigenvector problem.


Since the space $H_m$ is finite-dimensional,
\[
H_m =\left\{ \phi(x)=\sum_{k=1}^{m}b_{k}B_{k}(x) \right\},
\]%
where the basis functions $\{B_{k}(x):k\geq 1\}$ are used to construct the sieve.
For example, $\{B_{k}(x):k\geq 1\}$ could be one of
the following: (i)
thin-plate splines, radial-basis wavelets or tensor-product wavelets
if $q$ has algebraic tails on its support $\Omega = \mathbb{R}^n$;
(ii) tensor-product Hermite polynomial basis or Gaussian radial basis
if $q$ has exponential thin tails on its support $\Omega =
\mathbb{R}^n$.


Form a vector $\Psi_m$ of functions of $x$ by stacking terms
$B_{k}(x), ~k=1,...,m$, and form two matrices:
\begin{eqnarray*}
\textbf{V}_{T} &=&\frac{1}{2T}\sum_{i=1}^{T}\left[\nabla \Psi
_{T}(x_{i})\right]\Sigma
(x_{t})\left[\nabla \Psi _{T}(x_{t})\right]^{\ast } ,\\
\textbf{W}_{T} &=&\frac{1}{T}\sum_{i=1}^{T}\Psi _{T}(x_{i})\Psi
_{T}(x_{i})^{\ast } ,
\end{eqnarray*}%
where $^{\ast }$ denotes transpose. Both matrices are symmetric and
positive semidefinite by construction. The matrix $\textbf{W}_{T}$
is typically nonsingular while $\textbf{V}_T$ is singular when a
constant function is in the sieve space $H_m$. Stack the
coefficients on the sieve basis into a vector $a_{T}$. The sample
counterpart to the NPC (or eigenfunction) problem is the following
generalized eigenvector problem:

\begin{equation}
\textbf{V}_{T}\widehat{a}_{T}=\widehat{\delta
}_{T}\textbf{W}_{T}\widehat{a}_{T} \label{eig1}
\end{equation}%
where $\widehat{a}_{T}$ is a generalized eigenvector and $\widehat{\delta }%
_{T}$ a generalized eigenvalue. Since $\textbf{W}_{T}$ is positive
definite, we may apply the Cholesky decomposition to transform this
generalized eigenvector problem into a standard eigenvector problem.

Associated with each eigenvector solution to (\ref{eig1}), is an
eigenfunction formed by multiplying the coefficient entries of the
eigenvector by the sieve basis functions. We have constructed this
sample problem so that one of the approximating eigenfunctions will
be constant whenever there is a nonzero constant function in $H_m$,
and the associated eigenvalue is zero.\footnote{ For reversible
diffusions we could instead approximate  the NPCs nonparametrically
by maximizing autocorrelations. For scalar diffusion models, this
method has been already considered in \citet{chs98} and
\citet{markus}. Both papers used a wavelet sieve as the approximating
space.}

\section{Conclusions and related literature}\label{sec:conclude}

We have studied NPCs from multiple vantage points. We have explored
their role in capturing variation subject to smoothness constraints
and their role in capturing long-run variation in time series
modeling.  We have also considered their use in approximation where
the smoothness constraints limit the family of functions to be
approximated.

We also used multivariate Markov diffusions as data generating
devices to interpret our NPCs. These NPCs are eigenfunctions of
conditional expectation operators when the Markov process is
reversible and hence imply conditional moment restrictions. Our
analysis expands on the result of \citet{hsback} that reversible
diffusions can be identified nonparametrically from discrete-time
low frequency stationary observations.

For more general diffusions, these NPC are orthogonal and have
orthogonal innovations analogous to those from the canonical
analysis of \citet{boxtiao}  for linear multiple time series models.
(See also \noindent \citet{py}.)  Thus our NPC construction provides
a convenient way to summarize implications of multivariate nonlinear
diffusion models. Given the nonlinearity in the state variables, it
is a nontrivial task to infer the global dynamics, and in particular
the long-run behavior from this local specification based on low
frequency data. Our characterization of NPCs offers a way to
characterize features of the implied time series that are typically
disguised from the local dynamics. While we featured diffusion
processes, more general processes including processes with jumps can
be accommodated by expanding the types of forms that are considered.

The idea of using eigenfunctions of conditional expectation
operators for estimation and testing of Markov processes based on
low frequency data has been suggested previously by \citet{demoura},
\citet{hsback}, \citet{ks}, \citet{hst}, \citet{florensrenaulttouzi},
\citet{chs98} and \citet{markus}. In particular, \citet{ks} use
eigenfunctions to construct quasi-optimal estimators of parametric
scalar diffusion models of the drift and diffusion coefficients from
discrete-time data in the special case in which the functional forms
of eigenfunctions are known {\it a priori}. \citet{hsback},
\citet{hst}, \citet{chs98}, \citet{markus} and \citet{dfg} study
semiparametric and nonparametric identification and
over-identification based on an eigenfunction extraction that is
closely related to the one analyzed here; see \citet{fan} for a
recent review. This previous literature focuses primarily on scalar
diffusion models and in some cases to scalar diffusions on compact
state spaces with reflective boundaries. Our analysis of Markov
diffusions extends to multivariate settings applicable to processes
without attracting barriers.


In this paper we have characterized  a particular type of functional
principal components motivated in part by long-run implications of
multivariate Markov diffusions. This is a natural first step.
Inferential issues, while crucial, are beyond the scope of this
paper.
Formalizing statistical comparisons of models and data in a
multivariate setting is an obvious next step, supported by either
parametric, semiparametric or nonparametric estimation.
There are a number of  recent statistical results on estimation and
inference of functional principal components of covariance operators
associated with i.i.d. or longitudinal sample of curves. See {\em
e.g.}, \citet{silverman}, \citet{ramsaysilverman}, \citet{hmw},
\citet{bhk} and \citet{zhc}. These existing results can in principle
be modified to establish asymptotic properties of our estimated NPCs
from discrete-time low frequency realizations of an underlying
multivariate Markov diffusion model.

\newpage

\begin{appendix}

\section{Proofs}

\begin{proof}[Proof of Claim \ref{claim:bound}] In solving the maximization component of the
problem, first limit the $\phi$'s to be in $H_N$ but orthogonal to
${\hat H}$. This can only reduce maximized value. The space of such
$\phi$'s contains more than just the zero element because $H_N$ has
$N+1$ dimensions. Write $\phi$ as: $\phi = \sum_{j=0}^N {\sf r}_j
\psi_j$. Since $Proj(\phi|{\hat H}) = 0$, the objective can be
expressed as: $\sum_{j=0}^N ({\sf r}_j)^2\lambda_j $. The constraint
set implies that
\[
\sum_{j=0}^N ({\sf r}_j)^2 \le 1
\]
because $f(\psi_j, \psi_\ell) = <\psi_j, \psi_\ell> = 0$ for $j \ne
\ell$. While the coefficients ${\sf r}_j$ cannot be freely chosen
($\phi$ must be orthogonal to ${\hat H}$), they can be scaled so
that the constraint is satisfied with equality. Since the sequence
of $\lambda_j$'s is decreasing, the maximized objective must be no
less than $\lambda_N$.
\end{proof}

\begin{proof}[Proof of Claim \ref{claim:attain}] Write $\phi$
as: $\phi = Proj(\phi | H_{N-1}) + \varphi $ where $\varphi$ is in
$H_{N-1}^\perp$.  Write:
\[
Proj(\phi | H_{N-1}) = \sum_{j=0}^{N-1} {\sf r}_j \psi_j \] Using
this decomposition, the objective can be written as:
$<\varphi,\varphi>$, and the constraint set can be written as:
\[
\sum_{j=0}^{N-1} ({\sf r}_j)^2 + <\varphi,\varphi> + \theta
f(\varphi,\varphi) \le 1,
\]
because $\psi_1, \psi_2, ..., \psi_{N-1}, \varphi$ are orthogonal,
and $f(\psi_j,\varphi) = f(\psi_j,\psi_{\ell}) = 0$ for $j = 0,
...,N-1$ and $\ell = j+1, j+2,..., N-1$.  To maximize the objective,
the coefficients ${\sf r}_j$'s are set to zero and $\varphi$ is
chosen by solving Problem \ref{prob:jose1} for $H = H_{N-1}^\perp$.
The conclusion follows.
\end{proof}

\begin{proof}[Proof of Proposition \ref{scalar}] \citet{hst} consider
densities from stationary scalar diffusions, whose boundaries are
not attracting. This proposition gives an equivalent statement  of
their compactness condition, written in terms of the stationary
density.  The scalar diffusion coefficient in their analysis is
$\varsigma^2$.
\end{proof}

To show that the form $f$ is closed extension of $f_o$, we verify
that ${\bar H}$ is a Hilbert space.

\begin{proposition}
\label{hilbertspace} ${\bar H}$ is a Hilbert space.
\end{proposition}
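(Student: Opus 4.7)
The plan is to verify the inner-product axioms for $<\cdot,\cdot>_{\bar f}$ and then show that $\bar H$ is complete in the associated norm. The only nonroutine point is completeness, so I will set up the other items briefly and concentrate on this.

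\textbf{Inner-product axioms.} Bilinearity and symmetry of $<\cdot,\cdot>_{\bar f}$ are inherited from $<\cdot,\cdot>$ and from $f$. Nonnegativity holds because $\Sigma$ is positive definite and $q>0$. Positive-definiteness follows because $<\phi,\phi>_{\bar f}\ge <\phi,\phi>$, so vanishing of the $\bar f$-norm forces $\phi=0$ in $L^2$. I also need that $f$ is finite on $\bar H\times \bar H$ and that the weak derivative $g=\nabla\phi$ is unique; the latter is the standard consequence of the density of $C_K^1$ test functions in a suitable local $L^2$-space (using Assumptions \ref{adensity} and \ref{adiffusion}, which make $q$ and $\Sigma$ continuous and locally uniformly positive definite). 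Finiteness of $f(\phi,\psi)$ is then the Cauchy--Schwarz inequality applied in the weighted $L^2$-space of measurable vector fields equipped with the weight $\Sigma q$.

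\textbf{Completeness.} Let $\{\phi_n\}\subset \bar H$ be Cauchy for $<\cdot,\cdot>_{\bar f}$. Then $\{\phi_n\}$ is Cauchy in $L^2(Q)$, so $\phi_n\to\phi$ in $L^2(Q)$ for some $\phi$. The sequence of (weak) gradients $\{\nabla\phi_n\}$ is Cauchy in the weighted space $L^2_{\Sigma q}$ of vector fields, which is itself a Hilbert space, so $\nabla\phi_n\to g$ in $L^2_{\Sigma q}$ for some measurable $g$ with $\int g'\Sigma g\, q<\infty$. The key step is to identify $g$ as the weak gradient of $\phi$. For any $\psi\in C_K^1$ with support in a compact set $K\subset\Omega$, Assumptions \ref{adensity} and \ref{adiffusion} give uniform upper and lower bounds for $q$ and for the eigenvalues of $\Sigma$ on $K$. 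Consequently, $\phi_n\to\phi$ in $L^2(K,\,dx)$ and $\nabla\phi_n\to g$ in $L^2(K,\,dx)$. Passing to the limit in
\[
\int \phi_n\,\nabla\psi\,dx \;=\; -\int \nabla\phi_n\cdot \psi\,dx,
\]
which holds for all $n$ because $\phi_n\in\bar H$, yields $\int\phi\,\nabla\psi\,dx=-\int g\cdot\psi\,dx$. Thus $\phi\in\bar H$ with $\nabla\phi=g$, and
\[
\|\phi_n-\phi\|_{\bar f}^{\,2} \;=\; \|\phi_n-\phi\|_{L^2(Q)}^{\,2}+{\tfrac12}\!\int(\nabla\phi_n-g)'\Sigma(\nabla\phi_n-g)\,q \;\longrightarrow\; 0,
\]
so $\bar H$ is complete.

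\textbf{Main obstacle.} The delicate step is the identification $\nabla\phi=g$. The tension is that convergence of $\{\phi_n\}$ is only available in the $q$-weighted $L^2$ norm and convergence of $\{\nabla\phi_n\}$ only in the $\Sigma q$-weighted norm, whereas the definition of weak derivative is formulated against Lebesgue measure. The localization argument above, using the continuity and strict positivity of $q$ and $\Sigma$ on compact subsets of $\Omega$ to pass between weighted and Lebesgue $L^2$-norms on the support of an arbitrary test function, is what makes the passage to the limit legitimate; once that is done, the remainder is routine bookkeeping.
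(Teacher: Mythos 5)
Your proof is correct and takes essentially the same approach as the paper. The paper's version writes $\Lambda\nabla\phi_j$ (with $\Lambda$ the symmetric square root of $\Sigma$) so that the gradients form a Cauchy sequence in $L^2(Q)$ rather than in your weighted space $L^2_{\Sigma q}$, but these are isometric formulations of the same idea; the crucial identification step — passing to the limit in $\int\phi_j\,\nabla u = -\int(\nabla\phi_j)u$ by using continuity and local positive-definiteness of $\Sigma$ (and positivity and continuity of $q$) to convert the weighted convergence into Lebesgue $L^2$ convergence on the compact support of the test function — is exactly what the paper does. If anything, your write-up is a bit more explicit about the role of $q$ in the localization, which the paper's phrasing leaves implicit.
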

\begin{proof}[Proof of Proposition \ref{hilbertspace}]
Let $\Lambda$ be the symmetric square root of the penalty matrix
$\Sigma $. If $\{\phi_{j}\}$ is a Cauchy sequence in ${\bar H}$,
then $\{\phi_{j}\} $ and the entries of $\{ \Lambda \nabla \phi
_{j}\}$ form  Cauchy sequences in ${L}^{2}$.  Denote the limits in $
{L}^{2} $ as
\begin{eqnarray*}
\phi &=&\lim_{j\rightarrow \infty }\phi _{j} \\
v &=&\lim_{j\rightarrow \infty }\Lambda \nabla \phi _{j}.
\end{eqnarray*}
For each $u\in {C}_{K}^{1}$ we know that: $$ \int \phi
_{j}\frac{\partial u}{\partial x}=-\int (\nabla \phi _{j})u,
$$ where $\frac{\partial u}{\partial x}$ is the partial derivative of $u$ with respect to $x$. Since $\Sigma $ is positive definite and continuous on any
compact subset of $\Omega $ and $u$ vanishes outside any such set,
it follows that $$ \int \phi \frac{\partial u}{\partial x}=-\int
(\Lambda ^{-1}v)u. $$ Hence $\phi \in {\bar H}$ with ${\nabla \phi
}=\Lambda^{-1}v$. Moreover, $\phi_{n}\rightarrow \phi $ in ${\bar
H}$.
\end{proof}


We now present a criteria for Condition \ref{cond:core} to hold.
This result is due essentially to \citet{azencott} and
\citet{davies85}.

\begin{proposition}\label{Markov} Consider a form $f_o$ that satisfies the
Beurling-Deny Criterion \ref{beurlingdeny}. Let ${\hat f}$ denote
the minimal extension of $f_o$ with domain  ${\mathcal D}({\hat
f}).$ Suppose that $1 \in {\mathcal D}({\hat f})$ and ${\hat
f}(1,\phi) = 0$ for all $\phi \in {\mathcal D}({\hat f})$. Then
${\hat f} = f$.
\end{proposition}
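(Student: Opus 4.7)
The plan is to prove the reverse inclusion $\mathcal{D}(f) \subseteq \mathcal{D}(\hat f)$; the forward inclusion $\mathcal{D}(\hat f) \subseteq \mathcal{D}(f)$ is automatic because $\hat f$ is by definition the smallest closed extension of $f_o$ and $f$ is a closed extension. Concretely, I will show that every $\phi \in \bar H$ can be approximated by $C_K^2$ functions in the graph norm $\sqrt{<\phi,\phi> + f(\phi,\phi)}$, so that $\phi \in \mathcal{D}(\hat f)$.

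The key tool supplied by the hypothesis is a family of smooth cutoff functions. Since $1 \in \mathcal{D}(\hat f)$ and $\hat f(1,\phi) = 0$ for all $\phi$ in that domain (in particular $\hat f(1,1) = 0$, which already follows by Cauchy--Schwarz on the positive semidefinite form), the definition of $\hat f$ as minimal closed extension yields a sequence $\eta_k \in C_K^2$ with $\eta_k \to 1$ in $L^2$ and
\[
\hat f(\eta_k - 1, \eta_k - 1) \;=\; f_o(\eta_k, \eta_k) \;\longrightarrow\; 0.
\]
Composing with a smooth normal contraction and invoking Beurling--Deny to keep the form under control, I may further assume $0 \le \eta_k \le 1$ pointwise and, passing to a subsequence, $\eta_k \to 1$ almost everywhere.

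With these cutoffs in hand I approximate a general $\phi \in \mathcal{D}(f)$ in three layers. First, truncate $\phi$ to $\phi_M = (-M) \vee \phi \wedge M$; Beurling--Deny places $\phi_M$ in $\mathcal{D}(f)$, and since $\nabla \phi_M = \nabla \phi \cdot \mathbf{1}_{\{|\phi|<M\}}$ a.e., the identity $f(\phi-\phi_M,\phi-\phi_M) = \tfrac12 \int_{\{|\phi|\ge M\}} (\nabla\phi)' \Sigma (\nabla \phi)\, q$ tends to zero by absolute continuity, so it suffices to treat bounded $\phi$. Second, multiply by the cutoff: $\phi_M \eta_k$ has compact support, and I claim $\phi_M \eta_k \to \phi_M$ in the $f$-graph norm as $k \to \infty$. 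Third, convolve $\phi_M \eta_k$ with a smooth mollifier to land in $C_K^2$; locally $\Sigma$ and $q$ are continuous and positive, so the weighted derivative norm is equivalent to the standard Sobolev norm and the usual mollifier estimates apply. A diagonal argument over $(M,k,\mathrm{mollifier})$ then closes the argument.

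The main obstacle is the middle layer. Expanding $\nabla(\phi_M \eta_k - \phi_M) = (\eta_k - 1) \nabla \phi_M + \phi_M \nabla \eta_k$ splits $f(\phi_M \eta_k - \phi_M, \phi_M \eta_k - \phi_M)$ into three quadratic contributions: the first, $\int (\eta_k-1)^2 (\nabla\phi_M)'\Sigma(\nabla\phi_M)\, q$, vanishes by dominated convergence since $\phi_M \in \mathcal{D}(f)$; the second, $\int \phi_M^2 (\nabla\eta_k)'\Sigma(\nabla\eta_k)\, q$, is bounded by $M^2 \cdot 2 f_o(\eta_k,\eta_k) \to 0$; the cross term is then controlled by Cauchy--Schwarz from the previous two. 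The crux is that the hypothesis $\hat f(1,\cdot) = 0$ is precisely what allows $f_o(\eta_k,\eta_k)$ to be driven to zero while $\eta_k \to 1$; without it, the smooth cutoffs cannot be made flat in the weighted $L^2$ sense, and the middle contribution in the error would persist. Everything else is routine truncation and mollification.
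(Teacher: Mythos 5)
Your proof is correct, but it takes a genuinely different route from the paper's. You argue directly at the level of the form: using the hypotheses to produce smooth cutoffs $\eta_k \in C_K^2$ with $\eta_k \to 1$ in $L^2$ and $f_o(\eta_k,\eta_k) \to 0$, you approximate an arbitrary $\phi \in \mathcal{D}(f)$ in three hands-on layers --- truncation to $\phi_M$ (via Beurling--Deny / the normal contraction property), multiplication by the cutoff $\eta_k$ (where the estimate $\int \phi_M^2 (\nabla\eta_k)'\Sigma(\nabla\eta_k)\,q \le 2M^2 f_o(\eta_k,\eta_k)$ is exactly where conservativeness enters), and local mollification on a compact set where $\Sigma$ and $q$ are bounded above and below. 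This shows $\mathcal{D}(f) \subseteq \mathcal{D}(\hat f)$, and since any two closed extensions of $f_o$ agree on the smaller domain, $f = \hat f$.

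The paper's proof is operator-theoretic rather than form-theoretic. It identifies the self-adjoint operators $F$ and $\hat F$ associated with $f$ and $\hat f$, notes that both extend the same second-order differential operator $\hat{L}$ on $C_K^2$, extends the generated submarkov semigroups and resolvents to $L^1$ (possible because $q$ is integrable), observes that the hypotheses force $\hat F 1 = 0$ and hence conservativeness $G^1 1 = 1$, and then invokes Davies (1985, Theorem 2.2) to conclude that $C_K^2$ is an operator core for $F^1$ in $L^1$, and therefore a form core for $f$. What the paper's route buys is brevity and a clean reduction to a published theorem; what it costs is the $L^1$ machinery and the external citation. Your route is longer and has more technical layers to police --- the chain rule $\nabla\phi_M = \nabla\phi \cdot \mathbf{1}_{\{|\phi|<M\}}$ a.e., the fact that a genuinely $C^2$ Lipschitz-$1$ contraction onto $[0,1]$ fixing $0$ and $1$ does not exist and must be replaced by an $\epsilon$-relaxed one, and the local Sobolev-norm equivalence needed for mollification --- but it is self-contained and makes transparent the mechanism: conservativeness is precisely what lets $1$ be approached from $C_K^2$ in the form norm, and that is what the cutoff layer needs. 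Both proofs are expressing the same uniqueness principle (conservativeness forces the minimal and maximal Markovian extensions to coincide); yours does it by constructing the approximating sequence explicitly, the paper's by citing the corresponding statement in the $L^1$ semigroup framework.
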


\begin{proof}[Proof of Proposition \ref{Markov}] As explained in Section \ref{sec:reverse}, associated with the forms $f$ and ${\hat
f}$ we may construct operators $F$ and ${\hat F}$ and resolvents $G$
and $\hat G.$ Integration by parts can be used to show that the
operators $F$ and ${\hat F}$ are extensions of the differential
operator \[ {\hat L} \phi = - {\frac 1 q} \sum_{i,j} {\frac \partial
{\partial x}_i} \left(q \sigma_{ij} {\frac {\partial \phi} {\partial
x_j}}\right),
\]
  defined on $C^2_K.$
The form ${\hat f}$ and hence the form $f$ satisfies the
Beurling-Deny Criterion \ref{beurlingdeny} (\citet{davies89} Theorem
1.3.5). Hence as stated in \citet{davies85} the operators $F$ and
${\hat F}$  can be extended to subspaces of $L^1.$ Similarly the
resolvents $G$ and $\hat G$ can be extended to $L^1.$ We will denote
the extended operators as $F^1,$ ${\hat F}^1,$ $G^1,$ and ${\hat
G}^1.$ Since $q$ is integrable, $L^2$ convergence implies $L^1$
convergence and consequently $F$ and ${\hat F}$ are restrictions of
$F^1 $ and ${\hat F}^1$, respectively. Similarly for the resolvent
operators.

If    ${\hat f}(1,\phi) = 0$ for all $\phi \in {\mathcal D}({\hat
f})$ then  ${\hat F}1=0$ and ${\hat G} 1=1.$ Consequently ${G}^ 1 1
=1.$ It follows from Theorem 2.2 in \citet{davies85} that $C_K^2$ is
a core for $F^1,$ in the sense that $F^1$ is the closure in $L^1$ of
${\hat L}$.\footnote{\citet{davies85} assumes that the coefficients
of ${\hat L}$ are $C^\infty.$ However the proof holds for $C^2$
coefficients since elliptic regularity holds even when the
coefficients are only Lipschitz (see Theorem 6.3 of \citet{agmon}) }
Hence $C_K^2$ is a core for $F^2,$ and thus a core for $f$ or,
equivalently, $f$ and ${\hat f}$ coincide.
\end{proof}



\begin{proof}[Proof of Proposition \ref{prop:recur}]  Since ${\hat f}$ is the minimal closed extension,
it has $C_K^2$ as its core. When this condition is met, a sequence
of functions $\phi_j$ in $C_K^2$ can be constructed that converge to
$1$ in $L^2$ and $f(\phi_j, \phi_j)$ converges to zero. See
\citet{fot} Theorem 1.6.6 and Theorem 1.6.7. An approximating
sequence of functions with compact support is supplied by \citet{fot}
in the proof of Theorem 1.6.7. This sequence can be smoothed using a
suitable regularization to produce  a corresponding approximating
sequence in $C_K^2$.  Thus the unit function is in the domain of
${\hat f}$ and ${\hat f}(1,\phi)=0$ for $\phi \in C_K^2$ and hence
for $\phi \in {\mathcal D}({\hat f})$. As we established above, this
is sufficient for  Condition \ref{cond:core}.
\end{proof}

\begin{proof}[Proof of Proposition \ref{prop:weaktrans}] Since $V$ is bounded from below, we may choose a $\theta > 0$ such
that $V + \theta$ is nonnegative. Construct the space:
\begin{eqnarray*}
{\check H} & \doteq &\{\psi \in L^2(leb): \int (V + \theta) (\psi)^2
< \infty, \mathrm{ there \ \ exists}\; g\; \mathrm{measurable, \ \
with}\; \int g^{\prime }\Sigma g < \infty,\\ & & \mathrm{and}\; \int
\psi \nabla \varphi = - \int g \varphi, \mathrm {for \ \ all }\;
\psi \in C_K^{1} \}.
\end{eqnarray*}
As in the proof of Proposition \ref{hilbertspace}, it follows that
${\check H}$ is a Hilbert space with inner product:
\[
\int (V + \theta+1) \psi {\tilde \psi} + \int (\nabla \psi)' \Sigma
(\nabla {\tilde \psi}).
\]
We show that $U {\bar H} \subset {\check H}$.

Since $C^2_K$ is a core for $f$, there exists a sequence $\{ \phi_j
: j=1,2,...\}$ in $C^2_K$ that converges to $\phi$ in the Hilbert
space norm of ${\bar H}$. Hence this sequence is Cauchy in that
norm. Writing $\psi_j=U \phi_j$ and applying equation (\ref{formn})
we obtain:
\begin{eqnarray*}
 \int (\phi_j - \phi_\ell)^2 (1 + \theta) q & + & \int (\nabla \phi_j - \nabla\phi_\ell)'
\Sigma  (\nabla \phi_j - \nabla \phi_\ell) q \cr & = & \int (V +
\theta +1) (\psi_j - \psi_\ell)^2 + \int (\nabla \psi_j - \nabla
\psi_\ell)'\Sigma (\nabla \psi_j - \nabla \psi_\ell).
\end{eqnarray*}
Thus $\{ \psi_j : j = 1,2,...\}$ is Cauchy in the Hilbert space norm
of ${\check H}$ and the limit point $\psi$ must satisfy $\psi = U
\phi$. Notice that $\int V (\psi)^2 + \int (\nabla \psi)' \Sigma
(\nabla \psi)$ equals ${\check H}$ squared norm minus $\theta +1$
times the $L^2(Q)$ squared norm. Thus,
\begin{eqnarray*}
\int V (\psi)^2 + \int (\nabla \psi)' \Sigma (\nabla \psi) & = &
\lim_{j \rightarrow \infty} \int (\nabla \psi_j)'\Sigma (\nabla
\psi_j) q \cr
 & = & \lim_{j \rightarrow \infty} (\nabla \phi_j)' \Sigma (\nabla
 \phi_j)q \cr
 & = & \int (\nabla \phi)' \Sigma (\nabla \phi) q.
 \end{eqnarray*}
This proves (\ref{weakbound}).

For a given $\psi = U \phi$ our candidate for the weak derivative
is,
\[
g \doteq \exp(-h)(-\phi \nabla h  + \nabla \phi ).
\]
To verify that $g$ is indeed the weak derivative, we must show that
for any $\varphi \in C^1_K$
\begin{equation} \label{firststep}
\int \psi \nabla \varphi = - \int g \varphi,
\end{equation}
and
\begin{equation} \label{secondstep}
\int g' \Sigma g < \infty.
\end{equation}
We check relation (\ref{firststep}) by applying integration by
parts,
\begin{eqnarray*}
- \int \nabla \psi \varphi = -\int [\exp(-h)(-\phi \nabla h  +
\nabla \phi )]\varphi= - \int \nabla \phi  \exp(-h) \varphi+ \int
\exp(-h)\varphi \phi \nabla h \\=  \int \phi  [ \exp(-h) \nabla
\varphi- \nabla h \exp(-h)\varphi]+ \int \exp(-h)\varphi \phi \nabla
h = \int \psi \nabla \varphi.
\end{eqnarray*}
Inequality (\ref{secondstep}) follows from (\ref{weakbound}).
\end{proof}

\begin{proof}[Proof of Proposition \ref{cllevel}]
Since $V$ is continuous and diverges at the boundaries, it must be
bounded from below. Also, it follows from Assumption
\ref{adiffusionb} that
\begin{eqnarray*}
{\mathcal V}_\theta \subset \{ \psi \in L^2(leb) : & \psi &  {\rm
has \ \ a \ \ weak
\ \ derivative \ \ and} \\
& \ \ &  \int \left(\theta   + {\frac 1 2} V \right) (\psi)^2 +
{\frac { \underline c} 2 }  \int |\nabla \psi|^2 \le 1 \}.
\end{eqnarray*}
We may then apply the argument in the proof of Theorem XIII.67 of
\citet{reedsimon} to establish that ${\mathcal V}_\theta$ is
precompact in ${L}^2(leb)$.
\end{proof}


\begin{proof}[Proof of Lemma \ref{lemma:davies}]
Consider a positive function
\[
\chi(x) = {\frac 1 \varsigma},
\]
and note that
\[
\left[\varsigma(x)^2 - {\underbar {\sf c}}\right]\nabla \chi(x) = -
\varsigma(x) {\nabla v}  + {\underbar {\sf c}} {\frac {\nabla
v(x)}{\varsigma(x)}}.
\]
For $\phi$ in $C_K^2$ we may apply integration by parts to show that
\begin{eqnarray*}
\int (\varsigma^2 - {\underbar {\sf c}}) \nabla \chi \cdot \nabla
\phi & = & \int \left[\left( {\frac {\varsigma^2 + {\underbar {\sf
c}}} \varsigma } \right) (\nabla v \cdot \nabla v) + \left({\frac
{\varsigma^2 - {\underbar {\sf c}}} \varsigma}\right) {\rm
trace}\left({\frac {\partial^2 v} {\partial x_i
\partial x_j}}\right) \right] \phi \cr
& = & \int {\check W}\chi \phi
\end{eqnarray*}
The conclusion follows from Theorem 1.5.12 of \citet{davies89}. While
\citet{davies89} uses test functions $\phi$ in $C_K^\infty$, the same
proof applies to $C_K^2$ test functions.
\end{proof}

\begin{proof}[Proof of Proposition \ref{rev-markov}]
The form $f$ satisfies the Beurling-Deny criteria (\citet{davies89}
Theorem 1.3.5). Thus there exists a self-adjoint operator $F$ which
is an extension of $F_o$ and generates a submarkov semigroup
$\exp(-tF)$. Theorem 7.2.1 of \citet{fot} guarantees that there
exists a Markov process $\{x_t\}$ that has $\exp(-tF)$ as its
semigroup of conditional expectations.  The semigroup $\exp(-tF)$
conserves probability because the unit function is in the domain of
the form $f$ and $f(1,\phi)=0$ for any $\phi \in {\mathcal D}(f)$.
As a consequence, the unit function is also in the domain of the
operator $F$, $F 1 =0$.
\end{proof}

\end{appendix}

\bibliography{longrun2}

\end{document}